\documentclass[aps,prxquantum,reprint,floatfix,superscriptaddress,nofootinbib]{revtex4-2} 

\usepackage[utf8]{inputenc}
\usepackage[T1]{fontenc}
\usepackage[english]{babel}
\usepackage{amsmath}
\usepackage{amssymb}
\usepackage{amsthm}
\usepackage{bm}
\usepackage{times}
\usepackage{braket}
\usepackage{graphicx}
\usepackage{mathtools}
\usepackage{tikz}
\usetikzlibrary{arrows.meta, positioning, calc, fit, shapes.geometric}
\usepackage[colorlinks=true,urlcolor=blue,citecolor=blue,linkcolor=blue]{hyperref}

\newtheorem{definition}{Definition}
\newtheorem{proposition}{Proposition}

\newcommand{\PL}{P_{\mathrm{L}}}
\newcommand{\RL}{R}
\newcommand{\SL}{S}
\newcommand{\E}{\mathbb{E}}

\newcommand{\Prob}{\mathbb{P}}
\newcommand{\decoder}{\mathcal{D}}
\newcommand{\circuit}{\mathcal{C}}

\begin{document}

\title{Optimizing quantum error correction through error attribution}

\author{Yuguo Shao}
\affiliation{Yau Mathematical Sciences Center, Tsinghua University, Beijing 100084, China}
\affiliation{Yanqi Lake Beijing Institute of Mathematical Sciences and Applications, Beijing 100407, China}

\author{Zi-Wen Liu}
\affiliation{Yau Mathematical Sciences Center, Tsinghua University, Beijing 100084, China}

\begin{abstract}
Logical error rate is a standard benchmark for quantum error correction, providing an aggregate measure of the overall effect of physical noise. In actual QEC architectures, errors at different circuit locations can have markedly different effects on logical failure even when their probabilities are equal. Here we introduce an {error attribution scheme} to quantitatively characterize the influence of individual circuit components on the logical error rate and use this refined information to guide targeted intervention and design for improved QEC performance. Our attribution estimator computes all component sensitivities jointly from the same Monte Carlo samples used to estimate the logical error rate. The resulting sensitivity maps identify hotspots where targeted noise reduction can most effectively improve logical performance. We test the scheme in circuit simulations of rotated surface code memories and high-rate lifted product codes: halving the noise strength on about 5\%–7\% of physical components selected by sensitivity reduces logical error rates by approximately 15\%–25\%, roughly twice the mean reduction from the same intervention on an equal number of randomly selected components. By connecting performance benchmarking to targeted optimization, our scheme provides a practical strategy for improving quantum error correction with limited resources.
\end{abstract}

\maketitle

\section{Introduction}

Fault-tolerant quantum computation relies on quantum error correction (QEC) to protect quantum information against physical noise and support reliable computation~\cite{kitaev1997quantum,gottesman1998heisenberg,fowler2012surface,litinski2019game}.
Classical simulation provides a widely used tool to assess the feasibility and performance of candidate QEC schemes by estimating logical error rates under specified noise models~\cite{aaronson2004improved,bravyi2016improved,gidney2021stim,higgott2025sparse}.
These estimates help determine error thresholds, compare QEC architectures, and quantify resource requirements, thereby guiding the design of codes, circuits, and decoders for fault-tolerant computation.
However, the logical error rate is only an aggregate measure that reveals little about the mechanisms underlying logical failure and thus offers limited guidance for improving QEC efficiency.

Physical errors associated with different circuit components can have different effects on the logical error rate, even when the component error probabilities are equal.
Previous studies have examined QEC performance under realistic noise~\cite{tomita2014low,gutierrez2016errors,darmawan2017tensor,demarti2022performance} and shown that tailoring codes and decoders to noise characteristics can improve performance~\cite{tiurev2023correcting,higgott2023improved,tuckett2019tailoring,bonilla2021xzzx}.
A complementary approach is to improve a given QEC scheme through targeted reductions in physical noise.
Experimental studies have used noise models informed by device measurements to estimate the contributions of different error sources (such as two-qubit gate errors, idle decoherence, readout errors, leakage, and crosstalk) to logical failure~\cite{acharya2023suppressing,acharya2025belowthreshold}.
To assess interventions on individual circuit components, we need to quantify how reducing each component's noise would affect the overall logical error rate.
Efficiently estimating these effects would help prioritize components for improvement especially when resources for noise reduction are limited.

\begin{figure}[!t]
    \centering
    \includegraphics[width=\columnwidth]{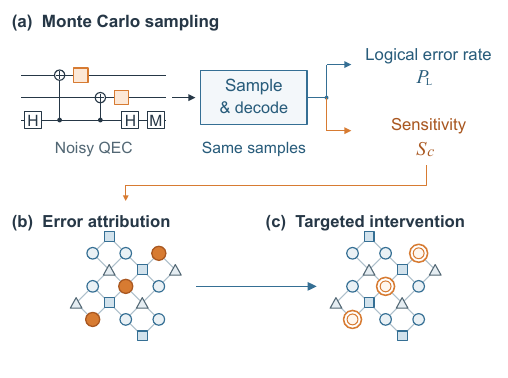}
    \caption{Error attribution framework for optimizing quantum error correction.
    (a) The logical error rate $P_{\mathrm{L}}$ and its sensitivities $S_c$ to component noise strengths are estimated jointly from the same Monte Carlo samples.
    (b) The resulting sensitivity map identifies hotspots (orange), highlighting physical components with large sensitivity magnitudes and thus a strong influence on the logical error rate.
    (c) These hotspots guide targeted noise reduction to improve logical performance under a limited intervention budget.}
    \label{fig:error_attribution_overview}
\end{figure}

In this work, we develop a systematic \emph{error attribution} framework for improving QEC performance by quantitatively mapping the influence of individual circuit components on logical failure, as illustrated in Fig.~\ref{fig:error_attribution_overview}.
We characterize this influence through the sensitivity of the logical error rate to each component's noise strength for fixed QEC circuits and decoders.
To estimate these sensitivities efficiently, we develop an attribution estimator that computes all component sensitivities jointly from the same Monte Carlo samples used to estimate the logical error rate, without separate simulations for individual components or differentiation through the decoder.
The resulting sensitivity maps identify hotspots where changes in component noise have a particularly strong impact on the logical error rate, thereby guiding efficient targeted interventions.

We demonstrate this approach in circuit simulations of rotated surface code memories~\cite{fowler2012surface,gidney2021stim} and high-rate lifted product codes~\cite{mitten2026processors}.
The sensitivity maps reveal substantial variation across physical components.
In the circuits studied here, the selected hotspots predominantly involve data qubit components in the surface code memories and CNOT couplings in the lifted product codes.
By halving the noise strengths of only about $5\%$--$7\%$ of all physical components, these targeted interventions substantially reduce logical error rates by approximately $14\%$--$18\%$ in the surface code memories and $23\%$--$25\%$ in the lifted product codes. These reductions are roughly twice the average reductions obtained with the same intervention on an equal number of randomly selected components, highlighting the substantial benefits of sensitivity-guided interventions for improving QEC performance.

\section{Error attribution map framework}

Attributing the logical error rate to individual noise components requires a precise description of the underlying circuit-level QEC experiment.
Let $\circuit$ denote the circuit implementing a specified memory or logical operation experiment for a chosen code~\cite{gottesman1998heisenberg,fowler2012surface,litinski2019game}.
The circuit specifies the data qubits that encode the logical state, the ancilla qubits used to measure checks, and the scheduled sequence of state preparations or resets, one- and two-qubit gates, idle intervals, and measurements over one or more rounds.

A noisy QEC experiment is specified by the circuit $\circuit$, a stochastic error model, and a decoder $D$.
The error model assigns parameterized error sources to specific circuit elements and spacetime locations, each of which we refer to as a \emph{noise component}.
The measurement record from each run is processed to obtain syndrome information and the readout of a chosen logical observable.
The decoder $D$ uses the syndrome information to infer a correction to this readout.
For an observable with a known ideal outcome, a mismatch between the corrected readout and the ideal outcome constitutes a logical failure.
The overall logical failure rate provides an aggregate measure of the overall effect of physical noise.  Our error attribution framework  is designed to quantify and compare the sensitivity of the overall logical error to each component's noise strength microscopically, thus providing a basis for optimizing QEC performance through informed prioritization, which is especially important with limited calibration or design resources.

We now formalize this error attribution method for a given QEC experiment.
Let $\mathcal{N}=\{1,2,\ldots,e\}$ denote the set of $e$ noise components associated with $\circuit$.
We parameterize the noise model by
\begin{equation}
    \bm{\lambda}
    =
    (\lambda_1,\lambda_2,\ldots,\lambda_e),
\end{equation}
where $\lambda_\alpha$ denotes the noise strength of component $\alpha\in\mathcal{N}$, expressed as an error probability or rate.
For a given $\bm{\lambda}$, each run of the noisy circuit yields two binary records: a syndrome record $\mathbf{m}\in\mathbb{F}_2^d$, supplied to the decoder, and a final readout record $\mathbf{y}\in\mathbb{F}_2^t$, from which the logical readout is obtained.
We denote their joint distribution by $p_{\bm{\lambda}}(\mathbf{m},\mathbf{y})$.

For the chosen logical observable, let $\delta_{\decoder}(\mathbf{m})\in\mathbb{F}_2$ be the correction bit inferred by $\decoder$ from the syndrome $\mathbf{m}$, and let $g(\mathbf{y})\in\mathbb{F}_2$ be the uncorrected readout bit.
Writing $b_{\mathrm{ideal}}\in\mathbb{F}_2$ for the known noiseless readout, the logical failure indicator is
\begin{equation}
    L(\mathbf{m},\mathbf{y},\decoder)
    =
    \mathbf{1}\!\left[
    g(\mathbf{y})
    \oplus
    \delta_{\decoder}(\mathbf{m})
    \neq b_{\mathrm{ideal}}
    \right],
\end{equation}
where $\oplus$ denotes addition in $\mathbb{F}_2$.
More generally, the failure indicator can include an incorrect result for any of several logical observables or a failure of the decoder to return an accepted correction.
The post-decoding logical error rate is the expectation of this indicator over $p_{\bm{\lambda}}(\mathbf{m},\mathbf{y})$:
\begin{equation}
    \PL(\bm{\lambda};\decoder)
    =
    \E_{\bm{\lambda}}\!\left[
    L(\mathbf{m},\mathbf{y},\decoder)
    \right].
    \label{eq:logical_failure_probability}
\end{equation}

Inspired by gradient-based noise-bottleneck diagnostics and experimentally informed error budgets~\cite{shao2025diagnosing,acharya2023suppressing,acharya2025belowthreshold}, we quantify the influence of each noise component by the partial derivative of $\PL$ with respect to its noise strength.
\begin{definition}[Noise sensitivity]
The sensitivity of noise component $\alpha$ is
\begin{equation}
    \SL_{\alpha}
    =
    \frac{\partial \PL(\bm{\lambda};\decoder)}
         {\partial \lambda_{\alpha}}.
    \label{eq:logical_sensitivity}
\end{equation}
\end{definition}
For an independent Bernoulli noise component, this derivative has a direct interpretation in terms of conditional logical failure probabilities.
Let $E_{\alpha}\in\{0,1\}$ be the corresponding error indicator, with $\Prob(E_{\alpha}=1)=\lambda_{\alpha}$.
Conditioning on $E_{\alpha}$ gives
\begin{equation}
    \SL_{\alpha}
    =
    \Prob(L=1\mid E_{\alpha}=1)
    -
    \Prob(L=1\mid E_{\alpha}=0).
    \label{eq:conditional_failure_sensitivity}
\end{equation}

The sensitivity $\SL_{\alpha}$ describes the local response of the logical error rate to a small change in $\lambda_{\alpha}$.
Multiplying this sensitivity by the component's current noise strength gives a linearized measure of its marginal contribution to the logical error rate.
\begin{definition}[Linearized marginal risk]
The linearized marginal risk of noise component $\alpha$ is
\begin{equation}
    \RL_{\alpha}
    =
    \lambda_{\alpha}\SL_{\alpha}
    =
    \lambda_{\alpha}
    \frac{\partial \PL(\bm{\lambda};\decoder)}
         {\partial \lambda_{\alpha}}.
    \label{eq:linearized_marginal_logical_risk}
\end{equation}
\end{definition}
This quantity can be interpreted by comparing the current noise model with one in which component $\alpha$ is removed.
Let $\bm{\lambda}^{(-\alpha)}=(\lambda_1,\ldots,\lambda_{\alpha-1},0,\lambda_{\alpha+1},\ldots,\lambda_e)$ denote the same noise model with component $\alpha$ removed.
A first-order expansion gives
\begin{equation}
    \PL(\bm{\lambda};\decoder)
    -
    \PL(\bm{\lambda}^{(-\alpha)};\decoder)
    \approx
    \RL_{\alpha}.
    \label{eq:linearized_marginal_removal_approximation}
\end{equation}

A physical component, such as a qubit or CNOT coupling, can be used repeatedly in the circuit.
To assess improvements to this component $c$, we account for all associated noise components, indexed by $\mathcal{N}_c\subseteq\mathcal{N}$.
If these noise components share a common error probability $p_c$, so that $\lambda_\alpha=p_c$ for all $\alpha\in\mathcal{N}_c$, the chain rule gives
\begin{equation}
    \SL_c
    =
    \frac{\partial\PL}{\partial p_c}
    =
    \sum_{\alpha\in\mathcal{N}_c}\SL_\alpha.
    \label{eq:physical_component_sensitivity}
\end{equation}
The corresponding linearized marginal risk is
\begin{equation}
    \RL_c
    =
    p_c\SL_c
    =
    \sum_{\alpha\in\mathcal{N}_c}\RL_\alpha.
    \label{eq:physical_component_marginal_risk}
\end{equation}
More generally, noise components can be grouped by gate type, syndrome round, or circuit region to assess interventions affecting multiple components.

\section{Attribution Estimator}

The central computational task is to estimate the
component sensitivities $\SL_\alpha$, from which the  linearized
marginal risks $\RL_\alpha=\lambda_\alpha\SL_\alpha$ follow directly.
We construct an estimator that computes all component sensitivities jointly from the same circuit-level Monte Carlo samples used to estimate $\PL$, without separate perturbed simulations or differentiation through the decoder.
The dominant computational complexity is therefore the same as that of estimating $\PL$ alone.
Throughout this section, we assume that the outcomes of the elementary noise components are mutually independent and that the distribution of component $\alpha$ depends only on its own noise strength $\lambda_\alpha$.

Consider $N$ independent shots.
Shot $i$ produces records $(\mathbf{m}^{(i)},\mathbf{y}^{(i)})$ and the residual logical failure indicator
\begin{equation}
    L^{(i)}
    =
    L(\mathbf{m}^{(i)},\mathbf{y}^{(i)},\decoder).
\end{equation}
For shot $i$, let $\mathbf{E}^{(i)}=(E_\alpha^{(i)})_{\alpha\in\mathcal N}$ collect the events sampled across all noise components.
The entry $E_\alpha^{(i)}$ records which event occurs at component $\alpha$, including the no-error event, and may take more than two values when several error types are possible.
For example, for a single-qubit depolarizing channel with strength $\lambda_\alpha$, $E_\alpha^{(i)}$ is $I$, $X$, $Y$, or $Z$: $I$ occurs with probability $1-\lambda_\alpha$, while each nonidentity Pauli occurs with probability $\lambda_\alpha/3$.
Let $\pi_\alpha(e_\alpha;\lambda_\alpha) =\Prob_{\lambda_\alpha}(E_\alpha=e_\alpha)$ denote the probability of outcome $e_\alpha$ at component $\alpha$.
Under the independence assumption, the probability of sampling an error configuration $\mathbf e$ factorizes as
\begin{equation}
    \pi_{\bm{\lambda}}(\mathbf{e})
    =
    \Prob_{\bm{\lambda}}(\mathbf{E}=\mathbf{e})
    =
    \prod_{\beta\in\mathcal N}
    \pi_\beta(e_\beta;\lambda_\beta).
    \label{eq:error_configuration_probability}
\end{equation}
Conditioning on the sampled error configuration, the sensitivity can then be written as
\begin{align}
    \SL_\alpha
    &=
    \frac{\partial}{\partial\lambda_\alpha}
    \sum_{\mathbf e}
    \pi_{\bm{\lambda}}(\mathbf e)
    \E\!\left[
        L^{(i)}\mid\mathbf E^{(i)}=\mathbf e
    \right]
    \nonumber\\
    &=
    \sum_{\mathbf e}
    \pi_{\bm{\lambda}}(\mathbf e)
    \E\!\left[
        L^{(i)}\mid\mathbf E^{(i)}=\mathbf e
    \right]
    \frac{\partial}{\partial\lambda_\alpha}
    \log\pi_{\bm{\lambda}}(\mathbf e)
    \nonumber\\
    &=
    \sum_{\mathbf e}
    \pi_{\bm{\lambda}}(\mathbf e)
    \E\!\left[
        L^{(i)}\mid\mathbf E^{(i)}=\mathbf e
    \right]
    \frac{\partial}{\partial\lambda_\alpha}
    \log\pi_\alpha(e_\alpha;\lambda_\alpha).
    \label{eq:likelihood_ratio_identity}
\end{align}
We denote the final logarithmic derivative in Eq.~\eqref{eq:likelihood_ratio_identity} by $s_\alpha(e_\alpha)$ and refer to it as the score of component $\alpha$:
\begin{equation}
    s_\alpha(e_\alpha)
    =
    \frac{\partial}{\partial\lambda_\alpha}
    \log\pi_\alpha(e_\alpha;\lambda_\alpha),
    \qquad
    s_\alpha^{(i)}=s_\alpha(E_\alpha^{(i)}).
    \label{eq:local_channel_score}
\end{equation}
With this notation, Eq.~\eqref{eq:likelihood_ratio_identity} becomes
\begin{equation}
    \SL_\alpha
    =
    \E_{\bm{\lambda}}\!\left[
        L^{(i)}s_\alpha^{(i)}
    \right].
    \label{eq:score_sensitivity_identity}
\end{equation}

A standard logical error rate simulation samples $(\mathbf m^{(i)},\mathbf y^{(i)})$ and applies $\decoder$ to determine $L^{(i)}$ for each shot.
Averaging these failure indicators gives
\begin{equation}
    \widehat{P}_{\mathrm{L},N}
    =
    \frac{1}{N}\sum_{i=1}^{N}L^{(i)}.
    \label{eq:logical_failure_mc_estimator}
\end{equation}
To estimate the sensitivities in the same run, we first tabulate $s_\alpha(e_\alpha)$ for every possible outcome of each noise component.
During shot $i$, the sampled event $E_\alpha^{(i)}$ selects the corresponding score $s_\alpha^{(i)}$.
Once $L^{(i)}$ is known, we add $L^{(i)}s_\alpha^{(i)}$ to the accumulator for component $\alpha$.
After $N$ shots, the estimates are
\begin{equation}
    \widehat{\SL}_{\alpha,N}
    =
    \frac{1}{N}
    \sum_{i=1}^{N}
    L^{(i)}s_{\alpha}^{(i)},
    \qquad
    \widehat{\RL}_{\alpha,N}
    =
    \lambda_{\alpha}\widehat{\SL}_{\alpha,N},
    \label{eq:mc_estimator}
\end{equation}
where unbiasedness follows directly from Eq.~\eqref{eq:score_sensitivity_identity}:
\begin{proposition}[Unbiasedness]
\label{prop:score_estimator_unbiasedness}
For a fixed deterministic decoder, or a randomized decoder whose internal randomness is independent of $\bm{\lambda}$, the estimators in Eq.~\eqref{eq:mc_estimator} satisfy
\begin{equation}
        \E_{\bm{\lambda}}[\widehat{\SL}_{\alpha,N}]=\SL_\alpha,\quad
        \E_{\bm{\lambda}}[\widehat{\RL}_{\alpha,N}]=\RL_\alpha.
    \label{eq:unbiased_component_estimators}
\end{equation}
\end{proposition}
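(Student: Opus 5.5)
The plan is to reduce the claim to the score identity in Eq.~\eqref{eq:score_sensitivity_identity}, and then to use linearity of expectation over the $N$ identically distributed shots. First I will check that the identity holds under the stated hypotheses on the decoder. The decisive point is the first line of Eq.~\eqref{eq:likelihood_ratio_identity}. There the sensitivity is written as $\sum_{\mathbf e}\pi_{\bm\lambda}(\mathbf e)\,h(\mathbf e)$, where
\begin{equation*}
h(\mathbf e)=\E\!\left[L^{(i)}\mid\mathbf E^{(i)}=\mathbf e\right].
\end{equation*}
For the derivative to act only on $\pi_{\bm\lambda}$, the quantity $h(\mathbf e)$ must not depend on $\bm\lambda$. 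Given the error configuration, the records $(\mathbf m,\mathbf y)$ are determined by propagating the sampled faults through the fixed circuit $\circuit$. For Pauli/Bernoulli component models this map is deterministic. More generally, any residual randomness in it comes from sources not parameterized by $\bm\lambda$, because every $\bm\lambda$-dependent stochastic element is by definition a noise component whose outcome is recorded in $\mathbf E$.

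Next I will handle the decoder. If $\decoder$ is deterministic, then $L$ is a fixed function of $(\mathbf m,\mathbf y)$. If $\decoder$ uses internal randomness $\xi$ whose law does not depend on $\bm\lambda$ and which is independent of $\mathbf E$, I will average over $\xi$ to get
\begin{equation*}
h(\mathbf e)=\E_{\xi}\!\left[L(\mathbf m(\mathbf e),\mathbf y(\mathbf e),\decoder_\xi)\right],
\end{equation*}
which is again $\bm\lambda$-independent. In both cases the sum over $\mathbf e$ is finite, so differentiation commutes with summation. Since each $\pi_\beta$ is differentiable, I can apply $\partial_{\lambda_\alpha}\pi=\pi\,\partial_{\lambda_\alpha}\log\pi$ and the factorization in Eq.~\eqref{eq:error_configuration_probability}. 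This gives $\SL_\alpha=\E_{\bm\lambda}[h(\mathbf E)s_\alpha(E_\alpha)]$. The tower property then yields $\E_{\bm\lambda}[L^{(i)}s^{(i)}_\alpha]$, because $s_\alpha$ is a function of $\mathbf E$.

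I expect the only real obstacle to be configurations with $\pi_\alpha(e_\alpha;\lambda_\alpha)=0$, where the score is undefined. For example, this happens for error outcomes at $\lambda_\alpha=0$, or for the identity outcome at $\lambda_\alpha=1$. I will restrict to the interior $0<\lambda_\alpha<1$, which makes every outcome have positive probability. Alternatively, I can note that the identity $\partial\pi=\pi\,\partial\log\pi$ is only needed on the support, with zero-probability outcomes contributing $\partial_{\lambda_\alpha}\pi_\alpha$ directly. Finally, since the shots are i.i.d.\ copies, linearity gives
\begin{equation*}
\E_{\bm\lambda}[\widehat{\SL}_{\alpha,N}]=\frac1N\sum_{i=1}^N\E_{\bm\lambda}[L^{(i)}s^{(i)}_\alpha]=\SL_\alpha.
\end{equation*}
Multiplying by the deterministic constant $\lambda_\alpha$ then gives $\E_{\bm\lambda}[\widehat{\RL}_{\alpha,N}]=\lambda_\alpha\SL_\alpha=\RL_\alpha$.
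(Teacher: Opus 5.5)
Your proposal is correct and follows the same route as the paper: you derive the score identity $\SL_\alpha=\E_{\bm\lambda}[L^{(i)}s_\alpha^{(i)}]$ through the likelihood-ratio computation in Eq.~\eqref{eq:likelihood_ratio_identity}, then use linearity over i.i.d.\ shots, and your explicit check that $\E[L^{(i)}\mid\mathbf E^{(i)}=\mathbf e]$ does not depend on $\bm\lambda$ is exactly where the decoder hypothesis enters (the paper leaves this implicit). One caveat: your ``alternatively'' remark does not rescue the boundary case, because zero-probability outcomes are never sampled, so their $\partial_{\lambda_\alpha}\pi_\alpha$ contribution to $\SL_\alpha$ is absent from $\E_{\bm\lambda}[L^{(i)}s_\alpha^{(i)}]$ (for a Bernoulli component at $\lambda_\alpha=0$ the estimator targets $-\Prob(L=1\mid E_\alpha=0)$ rather than $\SL_\alpha$), so restricting to $0<\lambda_\alpha<1$ is necessary, not optional.
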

Hence, without additional circuit samples or decoder calls, the same simulation yields unbiased sensitivity and marginal-risk estimators by evaluating and accumulating the component scores.

For a physical component $c$ comprising elementary components $\alpha\in\mathcal N_c$ with a common error probability $p_c$, the corresponding estimates are
\begin{equation}
    \widehat{\SL}_{c,N}
    =
    \sum_{\alpha\in\mathcal N_c}
    \widehat{\SL}_{\alpha,N},
    \qquad
    \widehat{\RL}_{c,N}
    =
    p_c\widehat{\SL}_{c,N}.
    \label{eq:physical_component_estimator}
\end{equation}
By linearity and Proposition~\ref{prop:score_estimator_unbiasedness}, these are unbiased estimators of $\SL_c$ and $\RL_c$.
The same aggregation applies to gate types, syndrome rounds, qubits, boundaries, and spacetime regions.


\section{Applying error attribution to rotated surface code memories}
\label{sec:surface_code_results}

\begin{figure*}[!t]
    \centering
    \includegraphics[width=0.98\textwidth]{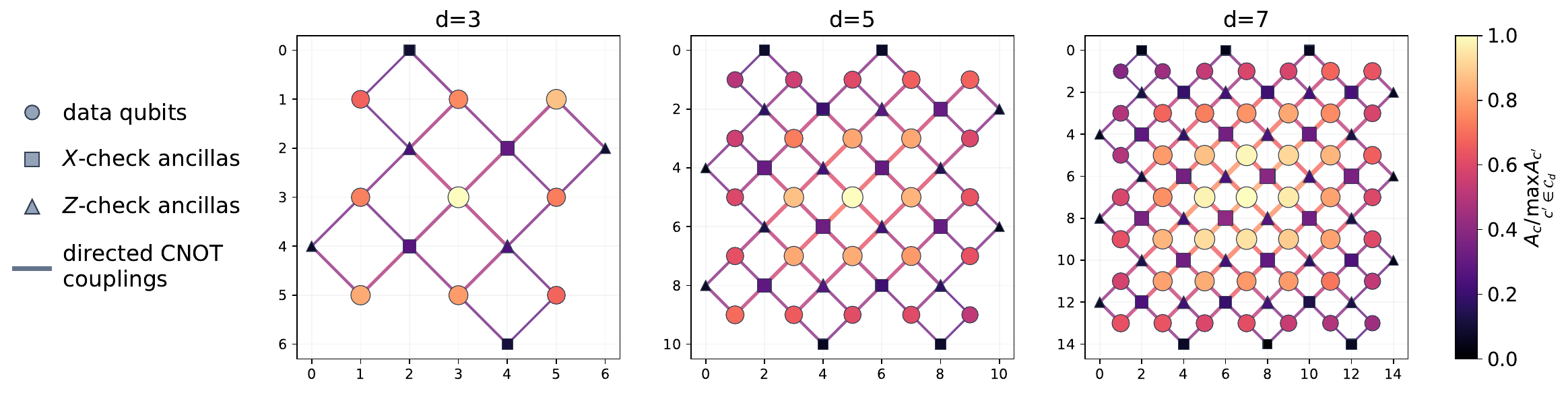}
    \caption{
    Physical component hotspot maps for rotated surface code memories at $p=0.001$ and $d$ syndrome rounds.  
    The color of each component encodes its within-distance relative sensitivity $\widetilde A_c^{(d)}=A_c/\max_{c'\in\mathcal C_d}A_{c'}$, where $A_c=\left|(S_c^{X}+S_c^{Z})/2\right|$ is the sensitivity magnitudes averaged over the $X$- and $Z$-memory, and $\mathcal C_d$ is the component set of the distance-$d$ circuit.
    The color bar is therefore shared across all panels and spans $[0,1]$.
    Circles denote data qubits, squares $X$-check ancillas, triangles $Z$-check ancillas, and links denote directed CNOT couplings.
    }
    \label{fig:surface_code_spatial_sensitivity}
\end{figure*}

We now concretely demonstrate the application of error attribution in the rotated surface code setting~\cite{fowler2012surface,gidney2021stim}.
For each $d\in\{3,5,7\}$, we simulate a distance-$d$ rotated surface code memory for $d$ rounds of syndrome extraction and using a uniform circuit-level noise parameter $p$, as in standard surface code benchmarks~\cite{gidney2021stim,higgott2025sparse}.
Each circuit contains $d^2$ data qubits, $d^2-1$ check ancillas, and $4d(d-1)$ distinct directed CNOT couplings.
We count each qubit or coupling as a single physical component across all syndrome rounds, giving $41$, $129$, and $265$ components for $d=3,5,7$, respectively.
Appendix~\ref{app:surface_code_protocol} gives the complete circuit, noise, decoder, and statistical protocol.
In the analysis below, we focus on the case $p=0.001$.

First observe that increasing the code distance from $d=3$ to $d=7$ reduces the logical error rate, although the number of noisy syndrome extraction rounds increases with $d$.
For the complete $d$-round experiments, the logical error rates averaged over the $X$- and $Z$-memory experiments are
\begin{small}
\begin{equation}
\begin{aligned}
    \overline{P}_{\mathrm{L}}^{(3)}
    &=8.17725\times10^{-4},\\
    \overline{P}_{\mathrm{L}}^{(5)}
    &=1.39429\times10^{-4},\\
    \overline{P}_{\mathrm{L}}^{(7)}
    &=1.89193\times10^{-5}.
\end{aligned}
\label{eq:surface_code_logical_rates}
\end{equation}
\end{small}

The corresponding sensitivities of the physical components are shown in Fig.~\ref{fig:surface_code_spatial_sensitivity}.
For all three code distances, the maps exhibit pronounced hotspots, with the largest sensitivities concentrated on a small subset of physical components.
At $d=3$, the three components with the highest sensitivities are the central data qubit component at coordinate $(3,3)$ and the two opposite corner data qubit components at $(5,1)$ and $(1,5)$, with the central component ranked first.
These three locations lie along the diagonal running from the lower left to the upper right (the ``/'' diagonal) in the displayed patch.
At $d=5$ and $d=7$, components with large sensitivities are likewise predominantly data qubit components, concentrated near the center of the patch and tend to lie along the ``/'' diagonal.
The spatial distributions of the seven and thirteen highest-sensitivity components for $d=5$ and $d=7$, respectively, are shown in Fig.~\ref{fig:hotspot_targeted_intervention}(a).
The features of this spatial pattern can be understood intuitively.
In an ideal model, the boundaries limit the ways a local error can combine with other errors to cause a logical failure after decoding.
Errors near the center can participate in more such combinations, making them more likely to cause a logical failure.
The diagonal tendency follows from the CNOT ordering: data qubits along the ``/'' direction may accumulate more propagated errors, making local errors at these locations more likely to cause logical failure.
Appendix~\ref{app:hotspot_geometry} develops these arguments in detail.

The leading components in Fig.~\ref{fig:hotspot_targeted_intervention}(a) carry more linearized marginal risk than their number alone would suggest.
Specifically, at $d=3,5,7$, the leading sets contain $7.32\%$, $5.43\%$, and $4.91\%$ of all physical components, respectively, but account for $15.1\%$, $10.2\%$, and $8.9\%$ of the total linearized marginal risk.
We also report the distribution of total linearized marginal risk across component types.
At $d=3,5,7$, data qubit components constitute $22.0\%$, $19.4\%$, and $18.5\%$ of all components, but account for $39.5\%$, $29.5\%$, and $26.1\%$ of the total linearized marginal risk, respectively.
At the same distances, ancilla-qubit components constitute $19.5\%$, $18.6\%$, and $18.1\%$ of all components, but account for only $7.9\%$, $7.5\%$, and $6.9\%$ of the risk.
CNOT components constitute the remaining $58.5\%$, $62.0\%$, and $63.4\%$ of all components and account for $52.6\%$, $63.0\%$, and $67.0\%$ of the risk.
These results show that errors associated with different circuit components have markedly unequal effects on logical performance.
This nonuniformity motivates the targeted intervention, which we examine in the next section.

\section{Hotspot-guided targeted circuit intervention}
\label{sec:hotspot_intervention}

Hotspot information suggests a quantum circuit analogue of targeted therapy: when only a limited number of physical components can be improved, the intervention can be aimed at those with the largest sensitivity instead of spread uniformly over the circuit. We demonstrate the practical benefits of this strategy in both rotated surface code memories and high-rate lifted product code instances of intensive interest in practical QEC. 

\subsection{Rotated surface codes}
\label{subsec:intervention_surface}

\begin{figure*}[!t]
    \centering
    \includegraphics[width=0.98\textwidth]{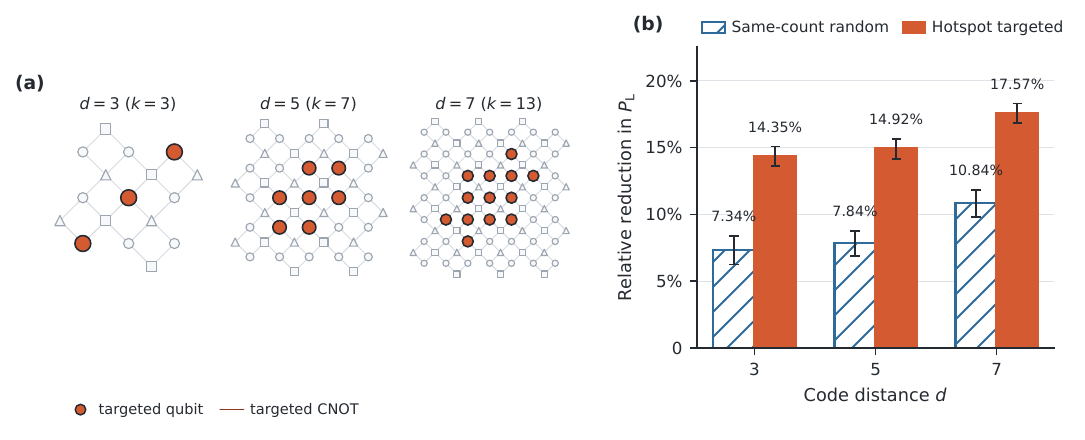}
    \caption{
    Hotspot-guided targeted intervention at $p=0.001$.
    (a) The $k$ components with the highest sensitivities on physical layouts.
    Gray marks and links show the qubits and nearest-neighbor coupling graph; marker shapes follow Fig.~\ref{fig:surface_code_spatial_sensitivity}.
    Orange filled marks denote the components with the highest sensitivities.
    The sets contain $k=3$ ($7.32\%$), $7$ ($5.43\%$), and $13$ ($4.91\%$) components at $d=3$, $5$, and $7$, respectively.
    (b) Relative logical error rate reduction with respect to the circuit without intervention.
    Orange bars show the reduction from hotspot-guided intervention, while blue hatched bars show the mean reduction from random intervention over 40 sets of $k$ components chosen uniformly at random.
    The local error rate of each selected component is reduced by a fraction $\alpha=0.5$.
    Error bars indicate 95\% bootstrap confidence intervals.
    }
    \label{fig:hotspot_targeted_intervention}
\end{figure*}

We first apply this strategy to the rotated surface code memories studied in Sec.~\ref{sec:surface_code_results}.
Based on the attribution results at $p=0.001$, we select the highest-sensitivity components as intervention targets: $3$ of the $41$ total components ($7.32\%$), $7$ of the $129$ ($5.43\%$), and $13$ of the $265$ ($4.91\%$) at $d=3,5,7$, respectively, as shown in Fig.~\ref{fig:hotspot_targeted_intervention}(a).
For each selected component $\ell$, we reduce its local error rate by a fraction $\alpha$:
\begin{equation}
    p'_{\ell}=(1-\alpha)p_{\ell},
    \qquad \alpha=0.5,
    \label{eq:targeted_intervention}
\end{equation}
The rescaling above is an idealized proxy for component-selective improvement.
Experimentally, component-selective improvements could be pursued through calibration and control tailored to the selected component.
For superconducting qubits, tuning pulse amplitudes, detunings, and shapes can reduce single-qubit gate errors and leakage~\cite{chen2016measuring}, while dynamical-decoupling sequences can improve state preservation during idle periods~\cite{pokharel2018demonstration}.
For two-qubit operations, pulse optimization in tunable-coupler architectures has enabled high-fidelity CZ and $ZZ$-free iSWAP gates~\cite{sung2021realization}.
At the processor level, qubit-frequency trajectories can be optimized jointly, with selective reoptimization of poorly performing gates~\cite{klimov2024optimizing}.

After hotspot-guided intervention, the logical error rates averaged over the $X$- and $Z$-memory experiments are
\begin{small}
\begin{equation}
\begin{aligned}
    \overline{P}_{\mathrm{L},\mathrm{t}}^{(3)}
    &=7.03120\times10^{-4},\\
    \overline{P}_{\mathrm{L},\mathrm{t}}^{(5)}
    &=1.18580\times10^{-4},\\
    \overline{P}_{\mathrm{L},\mathrm{t}}^{(7)}
    &=1.56532\times10^{-5}.
\end{aligned}
\label{eq:surface_code_targeted_logical_rates}
\end{equation}
\end{small}

We compare the hotspot-guided intervention against two references: the circuit without intervention and an ensemble of circuits in which the same intervention is applied to the same number of physical components selected uniformly at random.
The first reference serves as the baseline, while the second tests whether hotspot information improves target selection under a fixed intervention budget.

Fig.~\ref{fig:hotspot_targeted_intervention} shows the comparison results.
Relative to the circuit without intervention, the hotspot-guided intervention reduces the logical error rates by $14.35\%$, $14.92\%$, and $17.57\%$ for $d=3,5,7$, respectively.
For comparison, applying the same intervention to the same number of physical components selected uniformly at random reduces the logical error rates by $7.34\%$, $7.84\%$, and $10.84\%$ on average for $d=3,5,7$, respectively.
The corresponding improvement ratio is $\mathcal I=\Delta_{\mathrm{t}}/\Delta_{\mathrm{r}}$, where $\Delta_{\mathrm{t}}$ and $\Delta_{\mathrm{r}}$ denote the targeted and mean random reductions, respectively. 
The estimated ratios are $1.95$ [95\% confidence interval (CI), $1.72$--$2.26$] at $d=3$, $1.90$ [$1.71$--$2.14$] at $d=5$, and $1.62$ [$1.49$--$1.77$] at $d=7$.
All three confidence intervals lie entirely above one, indicating a benefit of hotspot-informed selection over random selection.
In these circuits, halving the local error rates of only $4.9\%$--$7.3\%$ of the physical components reduces logical error rates by $14.3\%$--$17.6\%$.
The complete selection, sampling, and interval constructions are given in Appendix~\ref{app:hotspot_intervention}.

\begin{figure}[!t]
    \centering
    \includegraphics[width=\columnwidth]{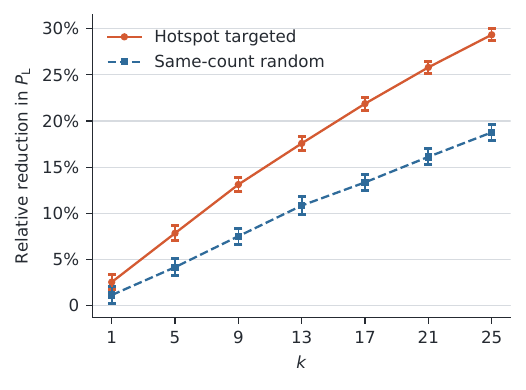}
    \caption{
    Relative logical error rate reduction versus intervention budget.
    The $d=7$ rotated surface code circuit contains 265 physical components and uses seven syndrome rounds at $p=0.001$.
    Orange circles show the reduction from hotspot-guided intervention, while blue squares show the mean reduction from random intervention over 40 sets of $k$ components chosen uniformly at random.
    The local error rate of each selected component is reduced by a fraction $\alpha=0.5$.
    Error bars indicate 95\% bootstrap confidence intervals.
    }
    \label{fig:surface_code_budget_sweep}
\end{figure}

We also examine how the benefit varies with the intervention budget.
Fig.~\ref{fig:surface_code_budget_sweep} compares interventions on the most sensitive components with those on randomly selected components across different intervention budgets for the $d=7$ circuit.
As the number of selected components increases from $1$ to $25$, the estimated reduction rises from $2.55\%$ to $29.34\%$ under hotspot-guided intervention.
For randomly selected components, the mean reduction rises from $1.15\%$ to $18.76\%$ and remains below the targeted estimate at every budget.
These results demonstrate how sensitivity information helps achieve significantly more efficient interventions across the budgets.

\subsection{High-rate lifted product codes}
\label{subsec:intervention_high_rate}

To test hotspot-guided intervention beyond low-rate, geometric codes, we study the instances in the recent mitten code family~\cite{mitten2026processors}:
$[\![150,30,10]\!]$ over $C_5\times S_3$, $[\![200,40,12]\!]$ over $C_4\times D_{10}$, and $[\![300,60,14]\!]$ over $C_{10}\times S_3$.
These high-rate non-Abelian lifted product qLDPC codes have an encoding rate $1/5$ and weight-9 stabilizer checks with geometrically nonlocal connectivity.
We reconstruct the syndrome extraction circuits from the published check matrices, following the circuit structure described in Ref.~\cite{mitten2026processors}.
We simulate the three codes for $10$, $12$, and $14$ syndrome rounds, respectively.
At $p=0.002$, the logical error rates averaged over the $X$- and $Z$-memory experiments are
\begin{small}
\begin{equation}
\begin{aligned}
    \overline{P}_{\mathrm{L}}^{[\![150,30,10]\!]}
    &=5.48966\times10^{-5},\\
    \overline{P}_{\mathrm{L}}^{[\![200,40,12]\!]}
    &=4.52000\times10^{-6},\\
    \overline{P}_{\mathrm{L}}^{[\![300,60,14]\!]}
    &=2.27493\times10^{-6}.
\end{aligned}
\label{eq:high_rate_logical_rates}
\end{equation}
\end{small}
Appendix~\ref{app:high_rate_intervention} gives the circuit construction, complete experimental protocol, and detailed results.

The syndrome extraction circuits for the three codes contain $N=1350$, $1800$, and $2700$ physical components, respectively.
These comprise $150$, $200$, and $300$ data qubits; $120$, $160$, and $240$ check ancillas, split equally between $X$ and $Z$ checks; and $1080$, $1440$, and $2160$ directed CNOT couplings.
We use the attribution framework to select components with large sensitivity magnitudes for intervention.
The selected sets contain $84$, $112$, and $168$ components, respectively, each accounting for $6.22\%$ of all components in its circuit.
Of the selected components, $77$ ($91.67\%$), $85$ ($75.89\%$), and $132$ ($78.57\%$) are directed CNOT couplings, respectively.
The remaining $7$ ($8.33\%$), $27$ ($24.11\%$), and $36$ ($21.43\%$) components are check ancillas; no data qubits are selected.
For each selected component, we reduce its local error rate by a fraction $\alpha=0.5$, as in Eq.~\eqref{eq:targeted_intervention}.
Similarly, we compare the hotspot-guided intervention with the baseline without intervention and with the same intervention on an equal number of components selected randomly.

\begin{figure}[!tp]
    \centering
    \includegraphics[width=\columnwidth]{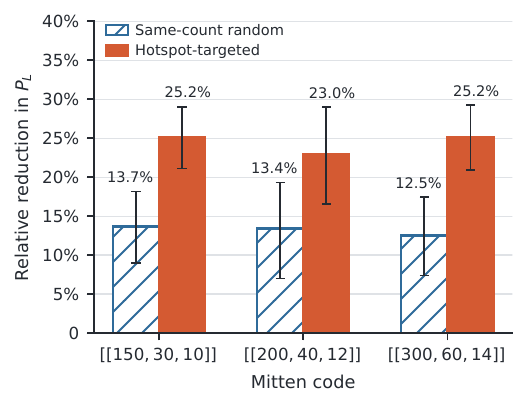}
    \caption{
    Relative logical error rate reduction for three high-rate lifted product codes at $p=0.002$.
    Each targeted set contains $6.22\%$ of the physical components in its circuit: $k=84$, $112$, and $168$ for the Mitten $[\![150,30,10]\!]$, $[\![200,40,12]\!]$, and $[\![300,60,14]\!]$ codes, respectively.
    Orange bars show the reduction from hotspot-guided intervention, while blue hatched bars show the mean reduction from random intervention over 40 sets of $k$ components chosen uniformly at random.
    The local error rate of each selected component is reduced by a fraction $\alpha=0.5$.
    Error bars indicate 95\% bootstrap confidence intervals.
    }
    \label{fig:high_rate_intervention}
\end{figure}

Fig.~\ref{fig:high_rate_intervention} compares the targeted and random interventions in all three codes.
For $[\![150,30,10]\!]$, the reduction relative to the circuit without intervention is $25.16\%$, compared with $13.69\%$ for uniform selection; for $[\![200,40,12]\!]$, the corresponding reductions are $23.01\%$ and $13.42\%$.
For $[\![300,60,14]\!]$, they are $25.17\%$ and $12.51\%$.
The corresponding improvement ratios, $\mathcal I=\Delta_{\mathrm{t}}/\Delta_{\mathrm{r}}$, are $1.84$ [95\% CI, $1.41$--$2.65$], $1.71$ [$1.20$--$2.95$], and $2.01$ [$1.44$--$3.29$], respectively.

These results further demonstrate the value of hotspot-guided intervention for high-rate qLDPC codes underpinning recent proposals for low-overhead fault-tolerant processors~\cite{mitten2026processors}, with potential to guide ongoing efforts toward their implementation on e.g.~reconfigurable atom array platforms.


\section{Discussion}

We introduced the error attribution method that leverages classical simulation techniques to identify noise bottlenecks in QEC procedures and guide targeted improvements of their performance.
Our estimator obtains all component sensitivities from the same Monte Carlo samples used to estimate the logical error rate, without additional decoder evaluations.
Across the two structurally distinct code families studied here, this information improves target selection under a limited component budget.
The differing hotspot patterns highlight the importance of assessing intervention priorities for a given circuit, noise model, and decoder, even when the elementary error probabilities are uniform.

Future work could extend the framework to account for both achievable noise reductions and calibration costs when prioritizing components for intervention.
Noise models inferred from device measurements~\cite{acharya2023suppressing,acharya2025belowthreshold} could be combined with estimates of how available calibration controls change component error probabilities.
Component sensitivities could then be used to estimate the logical benefit of feasible control changes and compare that benefit with the resources required.
Further studies could also examine how spatial and temporal noise correlations affect intervention priorities.
Testing and refining the framework on experimental devices could guide more effective targeted calibration and accelerate progress toward practical quantum computing.

\begin{acknowledgments}
We thank Zhenyu Chen, Song Cheng, Zhengwei Liu and Zhaohui Wei for valuable discussions.
YS is supported by a startup funding from YMSC, Tsinghua Shuimu Scholar Program, and China Postdoctoral Science Foundation (Grant No.~2026M793709).
ZWL is supported in part by National Natural Science Foundation of China (Grant No.~12475023), Dushi Program, and a startup funding from the YMSC.
\end{acknowledgments}

\section*{Data Availability}

The data supporting this study are available from the authors upon reasonable request.
The simulations were performed using the FaultScope software package~\cite{faultscope}.

\appendix

\section{Rotated surface code benchmark}
\label{app:surface_code_protocol}

\subsection{Code construction and syndrome extraction}

We consider the rotated surface code, which encodes one logical qubit in a square array of $d^2$ data qubits using local $X$- and $Z$-parity checks~\cite{fowler2012surface,tomita2014low}.
Our benchmarks use the canonical memory circuits at distances $d\in\{3,5,7\}$, with $d$ rounds of syndrome extraction~\cite{gidney2021stim}.

\paragraph{Code layout and logical operators.}
The data qubits are located at the vertices of a $d\times d$ square grid.
Each square has an $X$- or $Z$-type check acting on the four data qubits at its corners.
Squares sharing an edge have opposite check types.
Additional weight-two checks are placed along the boundary; the weight of a check is the number of data qubits on which it acts.
The weight-two $X$ checks lie along the top and bottom edges, and the weight-two $Z$ checks lie along the left and right edges.
Each stabilizer is measured using a separate ancilla qubit.
For the odd distances considered here, each check type has $(d-1)^2/2$ weight-four bulk checks and $d-1$ weight-two boundary checks, giving $d^2-1$ check ancillas in total.

We place the data qubits at coordinates $(2i-1,2j-1)$, with $i,j=1,\ldots,d$.
Each ancilla has even integer coordinates and is coupled to four neighboring data qubits for a bulk check or two for a boundary check.
Let $\mathbf r_a$ denote the coordinate of ancilla $a$.
Write $q_{\delta_x,\delta_y}(a)$ for the neighboring data qubit at offset $(\delta_x,\delta_y)$ from $a$, there is
\begin{equation}
    \mathbf r\!\left(q_{\delta_x,\delta_y}(a)\right)
    =\mathbf r_a+(\delta_x,\delta_y),
    \qquad \delta_x,\delta_y\in\{+1,-1\}.
    \label{eq:surface_code_check_coordinates}
\end{equation}
Let $N(a)$ be the set of data qubits coupled to ancilla $a$.
For an $X$- or $Z$-check ancilla $a$, the measured stabilizer is, respectively,
\begin{equation}
    K_a^X=\prod_{q\in N(a)}X_q,
    \qquad
    K_a^Z=\prod_{q\in N(a)}Z_q.
    \label{eq:surface_code_check_operators}
\end{equation}
A bulk check acts on all four corners of its cell; a boundary check acts on two adjacent data qubits.
The two types of stabilizer commute because an $X$ check and a $Z$ check share either zero or two data qubits.
The code space is their simultaneous $+1$ eigenspace.
There are $d^2-1$ independent stabilizer generators on $d^2$ data qubits, leaving a two-dimensional space that encodes one logical qubit.
A data qubit $X$ error flips the eigenvalues of the adjacent $Z$ checks, and a $Z$ error flips those of the adjacent $X$ checks; a $Y$ error can flip both types.
These joint-parity measurements reveal error information without measuring the encoded logical state~\cite{fowler2012surface}.

Logical operators commute with every stabilizer but act nontrivially on the encoded qubit.
In this layout, the logical Pauli operators are
\begin{equation}
    \overline X=\prod_{q\in\gamma_X}X_q,
    \qquad
    \overline Z=\prod_{q\in\gamma_Z}Z_q,
    \label{eq:surface_code_logical_operators}
\end{equation}
where $\gamma_X$ is the leftmost column of data qubits and $\gamma_Z$ is the top row.
These strings connect opposite edges of the patch and intersect at one data qubit, so $\overline X$ and $\overline Z$ anticommute.
The shortest nontrivial logical operator has weight $d$, giving the code parameters $[\![d^2,1,d]\!]$~\cite{tomita2014low,orourke2025compare}.

\paragraph{Syndrome extraction.}
In each round of the ideal circuit, both $X$- and $Z$-type stabilizers are measured using ancillas initialized in $\lvert0\rangle$, which interact with neighboring data qubits through CNOT gates and are then measured and reset to $\lvert0\rangle$ for the next round.
The CNOT gates are scheduled in four layers, with gates in each layer acting in parallel and no two gates in the same layer sharing a qubit.

For an $X$ check, a Hadamard $H_a$ first prepares the ancilla in $\lvert+\rangle$.
The ancilla controls a CNOT onto each data qubit in the check, and a second $H_a$ followed by $Z$-basis readout measures the ancilla in the $X$ basis.
The resulting measurement bit $m_a$ corresponds to the stabilizer eigenvalue $(-1)^{m_a}$~\cite{fowler2012surface,tomita2014low}.
For a bulk $X$ check, the CNOT gates in layers $1,2,3,4$ target the data qubits at the following coordinate offsets from ancilla $a$, respectively:
\begin{equation}
    \mathcal O_X
    =\bigl((+1,+1),(-1,+1),(+1,-1),(-1,-1)\bigr).
    \label{eq:surface_code_x_check_order}
\end{equation}
For an $X$ check on the top boundary, the ancilla controls two CNOTs in layers $1$ and $2$, targeting the two data qubits below it.
For an $X$ check on the bottom boundary, the two CNOTs occur in layers $3$ and $4$.
Fig.~\ref{fig:surface_code_x_check_circuit} illustrates this measurement circuit.

\begin{figure}[t]
    \centering
    \begin{tikzpicture}[
        x=1cm, y=1cm,
        wire/.style={line width=0.55pt},
        gate/.style={draw,fill=white,minimum width=5mm,minimum height=5mm,
                     inner sep=1pt,font=\small},
        wirelabel/.style={anchor=east,font=\small},
        layerlabel/.style={font=\scriptsize}
    ]
        \draw[wire] (0,0) -- (5.65,0);
        \foreach \y in {-0.65,-1.30,-1.95,-2.60}
            \draw[wire] (0,\y) -- (6.0,\y);
        \node[wirelabel] at (-0.12,0) {$a:\lvert0\rangle$};
        \node[wirelabel] at (-0.12,-0.65) {$q_{+1,+1}(a)$};
        \node[wirelabel] at (-0.12,-1.30) {$q_{-1,+1}(a)$};
        \node[wirelabel] at (-0.12,-1.95) {$q_{+1,-1}(a)$};
        \node[wirelabel] at (-0.12,-2.60) {$q_{-1,-1}(a)$};
        \node[gate] at (0.40,0) {$H$};
        \node[gate] at (4.75,0) {$H$};
        \node[gate] at (5.65,0) {$M_Z$};

        \foreach \x/\y/\j in {1.25/-0.65/1,2.10/-1.30/2,
                                2.95/-1.95/3,3.80/-2.60/4} {
            \draw[wire] (\x,0) -- (\x,\y);
            \fill (\x,0) circle (1.5pt);
            \draw[wire,fill=white] (\x,\y) circle (0.11);
            \draw[wire] (\x-0.11,\y) -- (\x+0.11,\y);
            \draw[wire] (\x,\y-0.11) -- (\x,\y+0.11);
            \node[layerlabel] at (\x,0.40) {$\j$};
        }
    \end{tikzpicture}
    \caption{
    Measurement circuit for a bulk $X$ stabilizer in the rotated surface code.
    The ancilla $a$, initialized in $\lvert0\rangle$, controls four CNOTs onto neighboring data qubits.
    Numbers indicate the CNOT layers, following the offset order $\mathcal O_X$ in Eq.~\eqref{eq:surface_code_x_check_order}.
    $H$ denotes a Hadamard gate, and $M_Z$ denotes measurement in the $Z$ basis.
    An $X$ check on the top (bottom) boundary uses only layers $1,2$ ($3,4$) and the corresponding two data qubits, with the same ancilla preparation, Hadamard gates, and measurement.
    }
    \label{fig:surface_code_x_check_circuit}
\end{figure}

For a $Z$ check, each data qubit is instead the CNOT control and $a$ is the target.
The ancilla is then measured directly in the $Z$ basis, so no Hadamard gates are applied to the ancilla qubit.
For a bulk $Z$ check, the CNOT gates in layers $1,2,3,4$ are controlled by the data qubits at the following coordinates, respectively:
\begin{equation}
    \mathcal O_Z
    =\bigl((+1,+1),(+1,-1),(-1,+1),(-1,-1)\bigr).
    \label{eq:surface_code_z_check_order}
\end{equation}
For a $Z$ check on the left boundary, the ancilla is the target of two CNOTs in layers $1$ and $2$, controlled by the two data qubits to its right.
For a $Z$ check on the right boundary, the two CNOTs occur in layers $3$ and $4$.
Fig.~\ref{fig:surface_code_check_circuits} shows the CNOT ordering for $X$ and $Z$ checks in the bulk and at the boundaries.

\begin{figure*}[t]
    \centering
    \begin{tikzpicture}[
        x=1cm,
        y=1cm,
        data/.style={circle,draw=black,fill=white,minimum size=7mm,
                     inner sep=0pt,font=\scriptsize},
        xcheck/.style={rectangle,draw=black,fill=blue!12,minimum size=7.5mm,
                       inner sep=0pt,font=\scriptsize},
        zcheck/.style={regular polygon,regular polygon sides=3,draw=black,
                       fill=red!12,minimum size=10mm,inner sep=0.5pt,
                       font=\scriptsize},
        cnot/.style={-{Stealth[length=2mm,width=1.4mm]},line width=0.65pt},
        layer/.style={circle,fill=white,inner sep=0.8pt,font=\scriptsize},
        panel/.style={font=\small\bfseries,anchor=west},
        note/.style={font=\scriptsize,align=center},
        boundary/.style={gray,densely dashed,line width=0.55pt}
    ]
        \begin{scope}[shift={(2.1,0)}]
            \node[panel] at (-2.0,1.85) {(a) Bulk $X$ check};
            \node[xcheck] (xa) at (0,0) {$a_X$};
            \node[data] (xpp) at (1.45,-1.15) {$q_{+,+}$};
            \node[data] (xmp) at (-1.45,-1.15) {$q_{-,+}$};
            \node[data] (xpm) at (1.45,1.15) {$q_{+,-}$};
            \node[data] (xmm) at (-1.45,1.15) {$q_{-,-}$};
            \draw[cnot] (xa) -- node[layer] {1} (xpp);
            \draw[cnot] (xa) -- node[layer] {2} (xmp);
            \draw[cnot] (xa) -- node[layer] {3} (xpm);
            \draw[cnot] (xa) -- node[layer] {4} (xmm);
            \node[note] at (0,-1.85) {$a_X$ is the control};
        \end{scope}

        \begin{scope}[shift={(10.1,0)}]
            \node[panel] at (-3.1,1.85) {(b) Boundary $X$ checks};
            \node[xcheck] (xb12) at (-1.55,0.15) {$a_X$};
            \node[data] (xbpp) at (-0.45,-0.95) {$q_{+,+}$};
            \node[data] (xbmp) at (-2.65,-0.95) {$q_{-,+}$};
            \draw[cnot] (xb12) -- node[layer] {1} (xbpp);
            \draw[cnot] (xb12) -- node[layer] {2} (xbmp);
            \draw[boundary] (-3.0,0.75) -- (-0.1,0.75);
            \node[note] at (-1.55,-1.55) {layers $1,2$};

            \node[xcheck] (xb34) at (1.55,-0.15) {$a_X$};
            \node[data] (xbpm) at (2.65,0.95) {$q_{+,-}$};
            \node[data] (xbmm) at (0.45,0.95) {$q_{-,-}$};
            \draw[cnot] (xb34) -- node[layer] {3} (xbpm);
            \draw[cnot] (xb34) -- node[layer] {4} (xbmm);
            \draw[boundary] (0.1,-0.75) -- (3.0,-0.75);
            \node[note] at (1.55,-1.55) {layers $3,4$};
        \end{scope}

        \begin{scope}[shift={(2.1,-5.0)}]
            \node[panel] at (-2.0,1.85) {(c) Bulk $Z$ check};
            \node[zcheck] (za) at (0,0) {$a_Z$};
            \node[data] (zpp) at (1.45,-1.15) {$q_{+,+}$};
            \node[data] (zmp) at (-1.45,-1.15) {$q_{-,+}$};
            \node[data] (zpm) at (1.45,1.15) {$q_{+,-}$};
            \node[data] (zmm) at (-1.45,1.15) {$q_{-,-}$};
            \draw[cnot] (zpp) -- node[layer] {1} (za);
            \draw[cnot] (zpm) -- node[layer] {2} (za);
            \draw[cnot] (zmp) -- node[layer] {3} (za);
            \draw[cnot] (zmm) -- node[layer] {4} (za);
            \node[note] at (0,-1.85) {$a_Z$ is the target};
        \end{scope}

        \begin{scope}[shift={(10.1,-5.0)}]
            \node[panel] at (-3.1,1.85) {(d) Boundary $Z$ checks};
            \node[zcheck] (zb12) at (-2.0,0) {$a_Z$};
            \node[data] (zbpp) at (-0.65,-1.05) {$q_{+,+}$};
            \node[data] (zbpm) at (-0.65,1.05) {$q_{+,-}$};
            \draw[cnot] (zbpp) -- node[layer] {1} (zb12);
            \draw[cnot] (zbpm) -- node[layer] {2} (zb12);
            \draw[boundary] (-2.85,-1.45) -- (-2.85,1.45);
            \node[note] at (-2.0,-1.75) {layers $1,2$};

            \node[zcheck] (zb34) at (2.0,0) {$a_Z$};
            \node[data] (zbmp) at (0.65,-1.05) {$q_{-,+}$};
            \node[data] (zbmm) at (0.65,1.05) {$q_{-,-}$};
            \draw[cnot] (zbmp) -- node[layer] {3} (zb34);
            \draw[cnot] (zbmm) -- node[layer] {4} (zb34);
            \draw[boundary] (2.85,-1.45) -- (2.85,1.45);
            \node[note] at (2.0,-1.75) {layers $3,4$};
        \end{scope}
    \end{tikzpicture}
    \caption{
    CNOT ordering for rotated surface code stabilizer measurements.
    (a,b) Bulk and boundary $X$ checks.
    (c,d) Bulk and boundary $Z$ checks.
    Circles denote data qubits, squares $X$-check ancillas, and triangles $Z$-check ancillas.
    Numbers indicate the four CNOT layers.
    Bulk checks couple to four data qubits, while boundary checks couple to two.
    Dashed lines mark the patch boundaries.
    The top $X$ and left $Z$ boundary checks use layers $1,2$; the bottom $X$ and right $Z$ boundary checks use layers $3,4$.
    }
    \label{fig:surface_code_check_circuits}
\end{figure*}

\paragraph{Memory experiments.}
The $X$-memory experiment initializes every data qubit in $\lvert+\rangle$, performs $d$ rounds of syndrome extraction, and measures every data qubit in the $X$ basis at the end.
The $Z$-memory experiment uses the same extraction circuit, with data initialization in $\lvert0\rangle$ and final readout in the $Z$ basis.
In the absence of errors, the initial states are $+1$ eigenstates of $\overline X$ and $\overline Z$, respectively.
In the first round, $Z$ checks in the $X$ memory and $X$ checks in the $Z$ memory give random outcomes.
These measurements project the data into the corresponding stabilizer eigenspace, so subsequent rounds give the same outcomes in the absence of errors.

Let $m_a(t)\in\mathbb F_2$ be the measurement result of check $a$ in round $t$, with $0$ denoting the $+1$ eigenvalue and $1$ denoting the $-1$ eigenvalue.
Between consecutive rounds, a detection event occurs when the same check gives different measurement outcomes~\cite{orourke2025compare,gidney2021stim}.
For each check $a$ and $t\ge2$, we record detection events using the binary variable
\begin{equation}
    d_a(t)=m_a(t)\oplus m_a(t-1),
    \qquad t=2,\ldots,d,
    \label{eq:surface_code_round_detector}
\end{equation}
where $\oplus$ denotes addition modulo two, and a detection event occurs when $d_a(t)=1$.
For the ideal initial state, every $X$ check in the $X$ memory and every $Z$ check in the $Z$ memory has eigenvalue $+1$.
For these checks, the first measurement is compared with the expected bit $0$:
\begin{equation}
    d_a(1)=m_a(1).
    \label{eq:surface_code_initial_detector}
\end{equation}
The other checks have random first-round outcomes, so no such comparison is available.

The final data readout also determines the $X$-check outcomes in the $X$ memory and the $Z$-check outcomes in the $Z$ memory.
Let $b_q\in\mathbb F_2$ be the final measurement result of data qubit $q$.
We compare the XOR of the final measurement results on $N(a)$ with the outcome of check $a$ in the last round:
\begin{equation}
    d_a(d+1)=m_a(d)\oplus\bigoplus_{q\in N(a)}b_q.
    \label{eq:surface_code_boundary_detectors}
\end{equation}
For the odd distances, the $d-1$ pairs of consecutive rounds contribute $(d-1)(d^2-1)$ detectors; the initial and final comparisons contribute $(d^2-1)/2$ each, giving $d(d^2-1)$ detectors in total.

The measured logical result is the XOR of the final data-measurement bits along the leftmost column ($\gamma_X$) in the $X$ memory or the top row ($\gamma_Z$) in the $Z$ memory.
The decoder then uses the detector record to infer whether this result needs to be flipped.

\subsection{Noise model and decoding}

We use the standard circuit-level depolarizing noise model for surface code benchmarks~\cite{gidney2021stim,higgott2025sparse}.
Each one-qubit Clifford gate is followed by a one-qubit depolarizing channel with error probability $p$.
Each CNOT is followed by a two-qubit depolarizing channel acting jointly on its control and target qubits, also with total error probability $p$.
Independent one-qubit depolarizing channels with error probability $p$ are also applied to all data qubits before each round of syndrome extraction.
Measurement results are flipped with probability $p$.
A reset error occurs with probability $p$, preparing $\lvert1\rangle$ instead of $\lvert0\rangle$ or $\lvert-\rangle$ instead of $\lvert+\rangle$.
For a $k$-qubit depolarizing channel,
\begin{equation}
    \mathcal{E}_{p}(\rho)
    =
    (1-p)\rho
    +
    \frac{p}{4^k-1}
    \sum_{P\in\mathcal{P}_k\setminus\{I\}}P\rho P,
    \label{eq:depolarizing_channel}
\end{equation}
where $\mathcal{P}_k=\{I,X,Y,Z\}^{\otimes k}$ is the set of $k$-qubit Pauli operators.
For the two-qubit channel ($k=2$), $I\otimes I$ occurs with probability $1-p$, and each of the other $15$ Pauli products occurs with probability $p/15$.

We decode the detector record using minimum-weight perfect matching, implemented in PyMatching~\cite{higgott2025sparse}.
For each circuit, the decoding graph is derived from its circuit-level noise model using Stim~\cite{gidney2021stim}, with edge weights determined by the corresponding error probabilities.

\subsection{Simulation and statistical analysis}

We simulate the noisy circuits with the Forward simulator in FaultScope~\cite{faultscope}.
At $p=0.001$, we estimate the logical error rate and the sensitivities of all physical components from the same Monte Carlo samples.
For each memory basis, we use a total of $4\times10^7$, $4\times10^8$, and $4\times10^9$ shots at $d=3,5,7$, respectively.
Summed over both bases, the three distances therefore use $8\times10^7$, $8\times10^8$, and $8\times10^9$ shots and record 65,418, 111,543, and 151,354 logical failures.

Within each memory basis, the logical error rate is estimated as the total number of logical failures divided by the total number of shots.
We denote the resulting $X$- and $Z$-memory estimates by $P_{\mathrm{L}}^{X}$ and $P_{\mathrm{L}}^{Z}$, respectively.
The reported logical error rate is then obtained by averaging these two estimates:
\begin{equation}
    \overline{P}_{\mathrm{L}}
    =\frac{P_{\mathrm{L}}^{X}+P_{\mathrm{L}}^{Z}}{2}.
    \label{eq:surface_code_basis_averaged_logical_rate}
\end{equation}

We use bootstrap resampling to quantify the statistical uncertainty in the reported logical error rate.
Let $N_b$ denote the total number of shots for memory basis $b\in\{X,Z\}$.
In each bootstrap replicate, we independently generate a failure count $F_b^{*}$ for each basis by treating the $N_b$ shots as independent samples, each with failure probability $P_{\mathrm{L}}^{b}$:
\begin{equation}
    F_b^{*}\sim\operatorname{Binomial}(N_b,P_{\mathrm{L}}^{b}).
    \label{eq:surface_code_bootstrap_counts}
\end{equation}
We then average the two rates $F_X^{*}/N_X$ and $F_Z^{*}/N_Z$.
The 95\% confidence intervals in Table~\ref{tab:surface_code_p0p001} are given by the 2.5th and 97.5th percentiles of these averages over 50,000 bootstrap replicates.
Table~\ref{tab:surface_code_p0p001} summarizes the results at $p=0.001$.

\begin{table*}[t]
\caption{
Logical error rates for rotated surface code memories at $p=0.001$ with $d$ syndrome rounds.
Shot and failure counts are summed over the $X$- and $Z$-memory experiments.
$\overline{P}_{\mathrm{L}}$ is the average of the $X$- and $Z$-memory logical error rates, as defined in Eq.~\eqref{eq:surface_code_basis_averaged_logical_rate}.
Brackets give $95\%$ bootstrap confidence intervals from $50{,}000$ replicates.
}
\label{tab:surface_code_p0p001}
\begin{ruledtabular}
\begin{tabular}{c r r c}
$d$ & Summed shots & Summed failures &
$\overline{P}_{\mathrm{L}}$ [95\% CI] \\
\hline
3 & $80{,}000{,}000$ & $65{,}418$ &
$8.17725\times10^{-4}$ $[8.11412,8.24013]\times10^{-4}$ \\
5 & $800{,}000{,}000$ & $111{,}543$ &
$1.39429\times10^{-4}$ $[1.38624,1.40239]\times10^{-4}$ \\
7 & $8{,}000{,}000{,}000$ & $151{,}354$ &
$1.89193\times10^{-5}$ $[1.88245,1.90149]\times10^{-5}$ \\
\end{tabular}
\end{ruledtabular}
\end{table*}

\subsection{Hotspot identification}
\label{app:hotspot_aggregation}

To calculate the sensitivity of each physical component, we group noise components by the physical qubits and directed CNOT couplings on which they act.
All one-qubit noise channels acting on the same physical qubit are assigned to that qubit, and all two-qubit noise channels acting on the same directed CNOT coupling are assigned to that coupling.
For each physical component, we sum the sensitivities of all associated noise components across all syndrome rounds according to Eq.~\eqref{eq:physical_component_sensitivity}.
The $X$- and $Z$-memory experiments therefore have the same set of physical components and the same component labels.
Table~\ref{tab:surface_code_inventory} summarizes the numbers of
qubits, detectors, and physical components in these circuits.

\begin{table*}[t]
\caption{
Numbers of qubits, detectors, and physical components in rotated surface code memories with $d$ syndrome rounds.
The counts are the same for the $X$ and $Z$ memories.
Physical components include the data qubits, check ancillas, and directed CNOT couplings, each counted once across all rounds.
}
\label{tab:surface_code_inventory}
\begin{ruledtabular}
\begin{tabular}{c r r r r r}
$d$ & Data & Check ancillas & Physical qubits & Detectors & Physical components \\
\hline
3 & 9  & 8  & 17 & 24  & 41  \\
5 & 25 & 24 & 49 & 120 & 129 \\
7 & 49 & 48 & 97 & 336 & 265 \\
\end{tabular}
\end{ruledtabular}
\end{table*}

We estimate the signed sensitivity of each physical component $c$ using all Monte Carlo samples within each memory basis.
We denote these estimates by $S_c^{X}$ for the $X$ memory and $S_c^{Z}$ for the $Z$ memory.
We identify hotspots using the magnitude of the sensitivity averaged over the $X$ and $Z$ memories:
\begin{equation}
    A_c
    =
    \left|
    \frac{S_c^{X}+S_c^{Z}}{2}
    \right|.
    \label{eq:surface_code_basis_averaged_sensitivity}
\end{equation}
Fig.~\ref{fig:surface_code_spatial_sensitivity} displays the within-distance normalized value
\begin{equation}
    \widetilde A_c^{(d)}
    =
    \frac{A_c}{\max_{c'\in\mathcal C_d}A_{c'}},
    \label{eq:surface_code_relative_sensitivity}
\end{equation}
where $\mathcal C_d$ is the component set at distance $d$.
This normalization puts all panels on a common $0$--$1$ scale while preserving the component ranking at each code distance.

We calculate the risk shares reported in Sec.~\ref{sec:surface_code_results} using the relation $\RL_c=p_c\SL_c$.
Since all physical components have the same error probability $p$, the linearized marginal risk of component $c$ is
\begin{equation}
    \overline{\RL}_{c}
    =
    p\frac{S_c^{X}+S_c^{Z}}{2}.
    \label{eq:component_basis_averaged_signed_risk}
\end{equation}
For a set $G\subseteq\mathcal C_d$, we sum the marginal risks of its components and divide by the total over all physical components:
\begin{equation}
    q_G
    =
    \frac{
        \sum_{c\in G}\overline{\RL}_{c}
    }{
        \sum_{c\in\mathcal C_d}\overline{\RL}_{c}
    }.
    \label{eq:signed_risk_share}
\end{equation}
Taking $G$ to contain all data qubits, all check ancillas, or all directed CNOT couplings gives the corresponding shares.

At $d=3,5,7$, we select the $3$, $7$, and $13$ components with the largest sensitivity magnitudes $A_c$, respectively.
These components make up $7.32\%$, $5.43\%$, and $4.91\%$ of all physical components in the corresponding circuits.
Using each selected set as $G$ in Eq.~\eqref{eq:signed_risk_share}, we find that these components account for $15.12\%$, $10.24\%$, and $8.93\%$ of the total linearized marginal risk, respectively.

\subsection{Hotspot structure}
\label{app:hotspot_geometry}

This subsection analyzes the central concentration and diagonal tendency of the surface code hotspots in Fig.~\ref{fig:surface_code_spatial_sensitivity}.
We first relate local errors to logical failure, then analyze boundary effects in an ideal model to explain central concentration, and finally examine how CNOT ordering contributes to the diagonal tendency.

\paragraph{Local errors and logical failure.}
A local error can cause a logical failure by combining with other errors and the decoder's recovery to change the logical outcome.
Let $e$ denote a sampled shot with the local error $f$ absent.
The corresponding syndrome and final readout records are $\mathbf{m}(e)$ and $\mathbf{y}(e)$, respectively, and the logical failure indicator is
\begin{equation}
\begin{aligned}
    \mathcal L(e)
    &=L(\mathbf{m}(e),\mathbf{y}(e),\decoder)\\
    &=g(\mathbf{y}(e))\oplus\delta_{\decoder}(\mathbf{m}(e))
      \oplus b_{\mathrm{ideal}}.
\end{aligned}
\end{equation}
Let $\mathbf d_f$ specify the syndrome bits flipped by $f$, and let $o_f$ indicate whether it flips the readout bit $g(\mathbf{y}(e))$.
Define
\begin{equation}
    K_f(\mathbf{m})=o_f\oplus\delta_{\decoder}(\mathbf{m})
    \oplus\delta_{\decoder}(\mathbf{m}\oplus\mathbf d_f).
    \label{eq:hotspot_fault_toggle}
\end{equation}
Write $e+f$ for the error configuration after adding $f$, with all other errors unchanged.
Substituting the syndrome and readout flips gives
\begin{equation}
\begin{aligned}
    \mathcal L(e+f)
    &=g(\mathbf{y}(e))\oplus o_f\\
    &\quad\oplus\delta_{\decoder}(\mathbf{m}(e)\oplus\mathbf d_f)
      \oplus b_{\mathrm{ideal}}\\
    &=\mathcal L(e)\oplus K_f(\mathbf{m}(e)).
\end{aligned}
\end{equation}
Thus $K_f(\mathbf{m}(e))=1$ means that adding $f$ changes the decoded logical outcome, introducing or removing a logical failure.
For an independent Bernoulli error and a fixed decoder, Eq.~\eqref{eq:conditional_failure_sensitivity} gives
\begin{equation}
\begin{aligned}
    S_f
    &=\E_e[\mathcal L(e+f)-\mathcal L(e)]\\
    &=\E_e[(1-2\mathcal L(e))K_f(\mathbf{m}(e))].
\end{aligned}
    \label{eq:hotspot_pivotal_sensitivity}
\end{equation}
The expectation is over shots with $f$ absent.

\paragraph{Boundary effects and central concentration.}
To analyze the central concentration, we count the combinations of errors that cause logical failure and examine how the boundaries restrict these combinations.
We consider an idealized model of a rotated surface code of distance $d=2t+1$, $t\ge1$, with perfect syndrome measurements and minimum-Hamming-weight decoding.
Only data qubits are noisy, and we analyze $X$ errors and $Z$ errors separately.
For a fixed choice $P\in\{X,Z\}$, each data qubit independently has a $P$ error with probability $p$ and remains unchanged with probability $1-p$.

Let $E$ be the set of data qubits carrying a $P$ error.
We denote by $\partial E$ the set of detectors flipped by an odd number of errors in $E$; an even number of errors that flip the same detector leave its outcome unchanged.
Let $C$ be the set of data qubits on which the decoder applies $P$ to correct the errors.
After the error and recovery, a qubit in both $E$ and $C$ has undergone $P$ twice, giving $P^2=I$, whereas a qubit in only one set has a residual $P$ error.
We denote the support of this residual Pauli operator by $E+C=(E\setminus C)\cup(C\setminus E)$.

We call an error support \emph{malignant} if it causes a logical failure under the specified decoder.
Minimum-weight decoding corrects every configuration with $|E|\le t$, so logical failure can first occur at $|E|=t+1$.
To compare locations, let $a_P(q)$ count the malignant configurations with exactly $t+1$ errors of type $P$, including an error at data qubit $q$.
Removing the error at $q$ from such a configuration leaves $t$ errors that the decoder can correct; adding it back causes logical failure.
At uniform error probability $p$, these configurations with $t$ errors give the leading contribution to the ideal model sensitivity:
\begin{equation}
    S_{q,\mathrm{ideal}}^{(P)}=a_P(q)p^t+O(p^{t+1}).
    \label{eq:hotspot_ideal_sensitivity_expansion}
\end{equation}
Here $P$ labels the physical error type, not the memory basis.

To identify the configurations counted by $a_P(q)$, we show that an error support $E$ with $|E|=t+1$ is malignant exactly when it is contained in the support $\Gamma$ of a weight-$d$ nontrivial logical operator of type $P$.
Here $\Gamma$ is a set of exactly $d$ data qubits.
Applying $P$ to these qubits leaves all stabilizer outcomes unchanged but acts on the encoded qubit as $\overline X$ for $P=X$ or $\overline Z$ for $P=Z$.

First suppose that $E$ causes a logical failure, and let $C$ be the recovery selected by the decoder.
The error support $E$ itself is a candidate recovery with the observed syndrome, so minimum-weight decoding guarantees $|C|\leq |E|=t+1$.
Failure means that the residual operator, supported on $E+C$, is a nontrivial logical Pauli of type $P$.
Because this code encodes one logical qubit, every nontrivial logical operator of type $P$ can be obtained from a weight-$d$ logical string by multiplying stabilizer generators of the same Pauli type.
Each such generator acts on $w=2$ or $4$ qubits.
If it overlaps the current operator on $k$ qubits, multiplication cancels the two $P$ operations on those $k$ qubits and adds a $P$ operation on each of the other $w-k$ qubits.
The weight therefore changes by $(w-k)-k=w-2k$, which is even.
The residual operator and the weight-$d$ logical string therefore have the same weight parity.
Since $d=2t+1$ is odd, $|E+C|$ is odd.
Using
\begin{equation}
    |E+C|=|E|+|C|-2|E\cap C|,
\end{equation}
we see that $|E|$ and $|C|$ must have opposite parity.
In particular, $|C|$ cannot equal $|E|=t+1$, so $|C|\le t$.
On the other hand, any nontrivial logical operator has weight at least $d$.
Combining these bounds gives
\begin{equation}
    d\le |E+C|\le |E|+|C|\le 2t+1=d,
    \label{eq:hotspot_minimum_failure_distance}
\end{equation}
Every inequality must therefore be an equality.
In particular, $|C|=t$ and $|E+C|=|E|+|C|$, which requires $E\cap C=\varnothing$.
Thus $\Gamma=E\cup C$ is a weight-$d$ nontrivial logical support containing $E$.

Conversely, suppose that $E\subseteq\Gamma$, where $\Gamma$ is a weight-$d$ nontrivial logical support.
The complementary support $C_0=\Gamma\setminus E$ has weight $t$.
Since a logical operator produces no detector events, $\partial\Gamma=0$, and hence
\begin{equation}
    \partial C_0=\partial E,\qquad E+C_0=\Gamma.
    \label{eq:hotspot_complement_syndrome}
\end{equation}
Thus $C_0$ is an admissible recovery of weight $t$: it has the observed syndrome, although applying it after $E$ leaves the logical error $\Gamma$.
Because the decoder minimizes weight, its chosen recovery $C$ must satisfy $|C|\le |C_0|=t$.

Since $C$ and $C_0$ have the same syndrome, their symmetric difference $C+C_0$ has zero syndrome.
Its weight satisfies
\begin{equation}
    |C+C_0|\le |C|+|C_0|\le 2t<d.
\end{equation}
A zero-syndrome Pauli operator is either a stabilizer, which acts as the identity on the code space, or a nontrivial logical operator.
The latter must have weight at least $d$, so the operator supported on $C+C_0$ must be a stabilizer.

The actual residual support can be written as
\begin{equation}
    E+C=(E+C_0)+(C_0+C)=\Gamma+(C_0+C),
\end{equation}
Thus the actual residual operator is the logical operator supported on $\Gamma$ multiplied by a stabilizer, and therefore has the same nontrivial logical action.
Every minimum-weight choice of recovery therefore causes a logical failure for this $E$.

This criterion reduces the counting problem to finding sets of error locations contained in the support of a single weight-$d$ nontrivial logical operator of type $P$.
Label the data qubit positions by indices $(x,y)\in\{0,\ldots,2t\}^2$, with $x$ increasing to the right and $y$ downward, as in Fig.~\ref{fig:surface_code_check_circuits}.
The central data qubit is at $(t,t)$.
For $P=X$, let $\Gamma_X$ be the support of a weight-$d$ logical $X$ operator.
Applying $Z$ to every data qubit in any horizontal row implements a logical $\overline Z$, which must anticommute with the logical $X$ operator supported on $\Gamma_X$.
Anticommutation therefore requires $\Gamma_X$ to intersect each row in an odd number of qubits.
There are $d$ disjoint rows and $|\Gamma_X|=d$, so $\Gamma_X$ contains exactly one qubit in each row.
For $P=Z$, every vertical column supports a logical $\overline X$.
The same anticommutation argument shows that the support $\Gamma_Z$ of a weight-$d$ logical $Z$ operator contains exactly one qubit in each column.

In the layout used here, the bulk check between columns $x,x+1$ and rows $y,y+1$ is of type $Z$ when $x+y$ is even and of type $X$ when $x+y$ is odd.
Fix a physical $X$ error at $(x,y)$ and consider a weight-$d$ logical $X$ operator whose support $\Gamma_X$ contains this qubit.
The qubit of $\Gamma_X$ in row $y+1$ must be at one of the two positions marked by question marks in Fig.~\ref{fig:hotspot_x_reachability}(a,b).
In the bulk, the $Z$ check immediately below $(x,y)$ already overlaps $\Gamma_X$ once in row $y$.
For the complete logical $X$ operator to commute with this check, $\Gamma_X$ must also contain one of its two corners in row $y+1$.
For even $x+y$, these corners are $(x,y+1)$ and $(x+1,y+1)$; for odd $x+y$, they are $(x-1,y+1)$ and $(x,y+1)$.
The rule above describes two adjacent rows, corresponding to $g=1$.
Applying this rule $g$ times, with $g\ge1$, gives the allowed positions $(x',y+g)$ for a second $X$ error.
The two $X$ errors at $(x,y)$ and $(x',y+g)$ can lie in the support of the same weight-$d$ logical $X$ operator if and only if
\begin{equation}
    x'\in
    \begin{cases}
        [x-g+1,x+g],&x+y\text{ even},\\
        [x-g,x+g-1],&x+y\text{ odd}.
    \end{cases}
    \label{eq:hotspot_x_error_reachability}
\end{equation}
Figure~\ref{fig:hotspot_x_reachability}(c,d) illustrates this repeated application of the single-row rule.
Each allowed connection can be extended vertically to the top and bottom boundaries to form the support of a weight-$d$ logical $X$ operator.

\begin{figure*}[t]
    \centering
    \begin{tikzpicture}[
        x=1cm,y=-1cm,
        font=\scriptsize,
        data/.style={circle,draw=black,fill=white,minimum size=7mm,
                     inner sep=0pt,font=\scriptsize},
        xcheck/.style={rectangle,draw=black,fill=blue!12,minimum size=7.5mm,
                       inner sep=0pt,font=\scriptsize},
        zcheck/.style={regular polygon,regular polygon sides=3,draw=black,
                       fill=red!12,minimum size=10mm,inner sep=0.5pt,
                       font=\scriptsize},
        continuation/.style={-{Stealth[length=2mm,width=1.4mm]},line width=0.65pt},
        example/.style={-{Stealth[length=2mm,width=1.4mm]},line width=0.95pt},
        panel/.style={font=\small,anchor=west},
        note/.style={font=\scriptsize,align=center},
        support/.style={gray,densely dashed,line width=0.55pt},
        candidate/.style={gray,densely dashed,line width=0.55pt,
                          shorten <=3.5mm,shorten >=3.5mm},
        guide/.style={gray!35,densely dashed,line width=0.4pt}
    ]
        \path[use as bounding box] (0,-.05) rectangle (16.6,9.65);

        \begin{scope}[shift={(4.0,1.25)}]
            \node[panel] at (-3.9,-1.05) {(a) $g=1$, $x+y$ even};
            \draw[support] (-2,0) rectangle (2,2);
            \draw[support] (0,0)--(0,2);
            \node[xcheck] at (-1,1) {$a_X$};
            \node[zcheck] at (1,1) {$a_Z$};
            \node[data] (reachAS) at (0,0) {$X$};
            \node[data] at (-2,0) {};
            \node[data] at (2,0) {};
            \node[data] at (-2,2) {};
            \node[data] (reachAL) at (0,2) {?};
            \node[data] (reachAR) at (2,2) {?};
            \draw[continuation] (reachAS)--(reachAL);
            \draw[continuation] (reachAS)
                .. controls (1.25,.1) and (2.32,.85) .. (reachAR);
            \node[note,anchor=south] at (0,-.46) {$(x,y)$};
            \node[note,anchor=north] at (0,2.46) {$(x,y+1)$};
            \node[note,anchor=north] at (2,2.46) {$(x+1,y+1)$};
        \end{scope}

        \begin{scope}[shift={(12.5,1.25)}]
            \node[panel] at (-3.9,-1.05) {(b) $g=1$, $x+y$ odd};
            \draw[support] (-2,0) rectangle (2,2);
            \draw[support] (0,0)--(0,2);
            \node[zcheck] at (-1,1) {$a_Z$};
            \node[xcheck] at (1,1) {$a_X$};
            \node[data] (reachBS) at (0,0) {$X$};
            \node[data] at (-2,0) {};
            \node[data] at (2,0) {};
            \node[data] (reachBL) at (-2,2) {?};
            \node[data] (reachBR) at (0,2) {?};
            \node[data] at (2,2) {};
            \draw[continuation] (reachBS)--(reachBR);
            \draw[continuation] (reachBS)
                .. controls (-1.25,.1) and (-2.32,.85) .. (reachBL);
            \node[note,anchor=south] at (0,-.46) {$(x,y)$};
            \node[note,anchor=north] at (-2,2.46) {$(x-1,y+1)$};
            \node[note,anchor=north] at (0,2.46) {$(x,y+1)$};
        \end{scope}

        \node[panel] at (.1,4.9) {(c) $g\geq1$, $x+y$ even};
        \node[panel] at (8.6,4.9) {(d) $g\geq1$, $x+y$ odd};

        \foreach \ox/\parity in {4.0/0,12.5/1} {
            \begin{scope}[shift={(\ox,5.72)},x=.86cm,y=-1.0cm]
                \foreach \k in {-3,-2,-1,0,1,2,3} {
                    \draw[guide] (\k,-.15)--(\k,3.12);
                    \node[note,anchor=north] at (\k,3.52) {$\k$};
                }
                \foreach \g in {0,1,2,3} {
                    \node[note,anchor=east] at (-3.65,\g) {$\g$};
                }

                \foreach \g in {0,1,2} {
                    \ifnum\g=0
                        \def\kmin{0}\def\kmax{0}
                    \else
                        \pgfmathtruncatemacro{\kmin}{-\g+1-\parity}
                        \pgfmathtruncatemacro{\kmax}{\g-\parity}
                    \fi
                    \foreach \k in {\kmin,...,\kmax} {
                        \pgfmathtruncatemacro{\r}{mod(\parity+\k+\g+20,2)}
                        \pgfmathtruncatemacro{\nextg}{\g+1}
                        \pgfmathtruncatemacro{\nextk}{\k+1-2*\r}
                        \draw[candidate] (\k,\g)--(\k,\nextg);
                        \draw[candidate] (\k,\g)--(\nextk,\nextg);
                    }
                }

                \node[data] (reachRoot\parity) at (0,0) {$X$};
                \foreach \g in {1,2,3} {
                    \pgfmathtruncatemacro{\kmin}{-\g+1-\parity}
                    \pgfmathtruncatemacro{\kmax}{\g-\parity}
                    \foreach \k in {\kmin,...,\kmax}
                        \node[data] at (\k,\g) {};
                }
                \ifnum\parity=0
                    \node[data] (reachFirst\parity) at (1,1) {$X$};
                    \node[data] (reachSecond\parity) at (1,2) {$X$};
                \else
                    \node[data] (reachFirst\parity) at (-1,1) {$X$};
                    \node[data] (reachSecond\parity) at (-1,2) {$X$};
                \fi
                \node[data] (reachLast\parity) at (0,3) {$X$};
                \draw[example] (reachRoot\parity)--(reachFirst\parity);
                \draw[example] (reachFirst\parity)--(reachSecond\parity);
                \draw[example] (reachSecond\parity)--(reachLast\parity);
            \end{scope}
        }
    \end{tikzpicture}
    \caption{
    Allowed positions for $X$ errors in the support of a weight-$d$ logical $X$ operator, with $x$ increasing to the right and $y$ downward.
    As in Fig.~\ref{fig:surface_code_check_circuits}, circles denote data qubits, blue squares $X$-check ancillas, and red triangles $Z$-check ancillas.
    (a,b) For a fixed $X$ error at $(x,y)$, question marks show the two choices in row $y+1$ that allow the logical operator to commute with the $Z$ check below.
    (c,d) Applying the same condition over $g=1,2,3$ rows gives the allowed column offsets for even and odd $x+y$, respectively.
    }
    \label{fig:hotspot_x_reachability}
\end{figure*}

For $P=Z$, applying the same even-overlap requirement to the $X$ checks between adjacent columns gives the condition for two $Z$ errors at $(x,y)$ and $(x+g,y')$:
\begin{equation}
    y'\in
    \begin{cases}
        [y-g,y+g-1],&x+y\text{ even},\\
        [y-g+1,y+g],&x+y\text{ odd}.
    \end{cases}
    \label{eq:hotspot_error_reachability}
\end{equation}
All intervals are restricted to integer positions in $[0,2t]$.
Each interval contains $2g$ positions before boundary truncation.

When the fixed $X$ error is near the left or right boundary, the intervals above can include positions outside these boundaries, which are unavailable; the top and bottom boundaries have the same effect for $Z$ errors.
This leaves fewer combinations of the remaining errors that can cause logical failure together with the fixed error.
When the fixed error is at the center, more combinations of the remaining errors lie within the code boundaries and cause logical failure together with it; the following proposition establishes this comparison along the middle row for $X$ errors and the middle column for $Z$ errors.

\begin{proposition}[Transverse central enhancement]
In this ideal model, $a_X(q)$ is strictly largest at the center along the middle row, and $a_Z(q)$ is strictly largest at the center along the middle column:
\begin{equation}
\begin{aligned}
    a_X(x,t)&<a_X(t,t),\qquad x\ne t,\\
    a_Z(t,y)&<a_Z(t,t),\qquad y\ne t.
\end{aligned}
    \label{eq:hotspot_transverse_central_bound}
\end{equation}
\end{proposition}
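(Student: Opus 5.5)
By the malignancy criterion established above, $a_X(q)$ counts the $(t+1)$-subsets $E\ni q$ that are contained in the support $\Gamma_X$ of some weight-$d$ logical $X$ operator. The anticommutation argument shows that such a $\Gamma_X$ has exactly one qubit per row, and Eq.~\eqref{eq:hotspot_x_error_reachability} fixes the allowed column offsets between rows. So the plan is to describe each admissible $E$ as $t+1$ errors in distinct rows $y_0<y_1<\dots<y_t$, where consecutive errors satisfy the reachability interval for their row gap $g=y_{j+1}-y_j$. The interval must be clipped to $[0,2t]$. The first step is to show that pairwise consecutive reachability suffices for joint containment in one $\Gamma_X$. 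This holds because the single-row rule is a local path condition: a path through the admissible ``staircase'' graph of Fig.~\ref{fig:hotspot_x_reachability} can be chained through prescribed nodes, then extended to the top and bottom boundaries. Boundary rows pose no obstruction to extension, since each row has at least one allowed continuation inside the patch. The $Z$ case follows by the same argument with rows and columns exchanged, using Eq.~\eqref{eq:hotspot_error_reachability}.

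Next, fix the row $y=t$ of $q=(x,t)$ and write $a_X(x,t)$ as a sum over row patterns $(y_0,\dots,y_t)$ containing $t$. For each pattern, the term is the number of column assignments $(x_0,\dots,x_t)$ with $x_j=x$ at the row equal to $t$, each consecutive step lying in its clipped interval. This is a product of transfer-matrix entries, and the resulting count splits into an upward walk from $q$ and a downward walk from $q$. I would compare $x=t$ with $x\neq t$ pattern by pattern, i.e.\ with the rows and gaps held fixed. The claim to prove is that the number of such walks started at column $x$ is maximized at $x=t$, and that the inequality is strict for at least one pattern.

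The natural tool is a reflection or shift injection. Take a walk started at $x<t$ and shift every column by $t-x$. Unclipped steps are translation invariant, up to the parity-dependent offsets discussed below, and the shifted walk stays inside $[0,2t]$ whenever the original did not approach the right boundary. Walks that do approach the right boundary are handled by reflecting them about the center. For strictness, exhibit a walk from the center that reaches column $0$ (or $2t$) in row $0$; such a walk has no preimage once it is shifted outward. A walk of total vertical extent $t$ spreads by up to $t$ columns, so these boundary-touching walks exist.

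The main obstacle is the parity dependence of the intervals: $[x-g+1,x+g]$ for even $x+y$ versus $[x-g,x+g-1]$ for odd. Because of it, a shift by an odd amount changes the step rules, and mirror reflection swaps the two parities. I would first try shifting by $2$ in $x$, which preserves parity, and separately compare $x$ with $x\pm1$ via the reflection $x\mapsto 2t-x$, which maps the interval type for even $x+y$ onto the odd one. If this bookkeeping becomes unmanageable, the fallback is to compute the one-dimensional walk counts explicitly as truncated lattice-path (ballot-type) numbers and prove that they are unimodal, with their peak at the center.
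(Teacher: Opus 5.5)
Your setup (one error per row, consecutive clipped reachability, and for a fixed row pattern a count that factors into an upward and a downward walk from $q$) agrees with the paper. The decisive step, that the count at column $x$ is at most the count at $t$, is not established. The shift by $t-x$ maps allowed walks to allowed walks only when $t-x$ is even. The reflection $x\mapsto 2t-x$ does not fix the odd case: it preserves the parity of $x+y$ but turns $[x-g+1,x+g]$ into $[x'-g,x'+g-1]$, so it sends allowed walks to walks obeying the wrong rule at every step. (The reflection compatible with the step rules is $x\mapsto 2t+1-x$, which does not preserve the patch.)

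You also never state the fact that keeps shifted walks inside the patch: every unclipped walk from the center satisfies $|x_j-t|\le|y_j-t|\le t$. Instead you split off walks ``approaching the right boundary'' and propose to reflect them, which is not justified. Your strictness argument has the same parity problem, since for odd $t-x$ the shifted up-left diagonal from $(x,t)$ is not an allowed walk. The unimodality fallback is stronger than what is needed and is left entirely open.

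The missing idea, which removes the parity bookkeeping altogether, is to compare each $x$ with the \emph{unclipped} count rather than with $t$ directly. Without left and right boundaries, a gap of $g$ rows admits exactly $2g$ positions for the next error whatever the parity of the current one. So for a fixed row pattern the unclipped count is $\prod_j 2g_j$, independent of $x$. Clipping only removes configurations, so this is an upper bound for every $x$. From $(t,t)$ nothing is removed, by the displacement bound above, so the center attains the bound for every pattern.

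For strictness with $x<t$, take the pattern with all $t$ remaining errors in consecutive rows on one side of row $t$. From $(x,t)$ a leftward diagonal is allowed upward if $x+t$ is even and downward if $x+t$ is odd; it ends at column $x-t<0$, so that pattern loses a configuration. The case $x>t$ is symmetric, and the $Z$ case follows by exchanging rows and columns. This is the paper's argument.
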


\begin{proof}
For $P=X$, $a_X(x,t)$ counts malignant configurations with $t+1$ $X$ errors, one of which is fixed at $(x,t)$.
The remaining $t$ errors must occupy $t$ distinct rows other than $y=t$, because the support of a weight-$d$ logical $X$ operator contains exactly one qubit per row.
For each choice of these rows, we count the combinations of column positions that allow all $t+1$ errors to lie in the support of a single weight-$d$ logical $X$ operator.
Summing over the row choices gives $a_X(x,t)$.
Without the left and right boundaries, each gap of $g$ rows between consecutive errors gives $2g$ column choices for the next error as we count upward or downward from the fixed error, so the total count for any fixed choice of rows is independent of $x$.
At $x=t$, all these configurations lie within the code boundaries: each error is no farther from the center horizontally than vertically, and the vertical distance is at most $t$.
Thus the center attains the unrestricted count for every choice of occupied rows, while other positions can only lose configurations.
For $x<t$, place the remaining $t$ errors along the allowed diagonal extending left from $(x,t)$, one in each successive row.
The last error is at column $x-t<0$, outside the left boundary, so this configuration is excluded.
For $x>t$, diagonal steps towards increasing $x$ give the corresponding exclusion.
At least one choice of occupied rows therefore loses a configuration, proving the strict inequality for $a_X$.

For $P=Z$, $a_Z(t,y)$ counts malignant configurations with $t+1$ $Z$ errors, including one at $(t,y)$.
For each choice of $t$ distinct columns other than $x=t$, we count the combinations of row positions that allow all $t+1$ errors to lie in the support of a single weight-$d$ logical $Z$ operator.
With the top and bottom boundaries removed, each successive column gap $g$ admits $2g$ row choices, so the unrestricted count is independent of $y$.
At $y=t$, every such error configuration lies within the code boundaries: its vertical displacement from the fixed error on either side is bounded by the corresponding horizontal span, which is at most $t$.
If $y<t$, choose the horizontal direction that permits $t$ consecutive diagonal steps towards decreasing $y$; the last error lies at row $y-t<0$, outside the top boundary, so this configuration is excluded.
If $y>t$, use steps towards increasing $y$.
This proves the strict inequality for $a_Z$.
\end{proof}

\paragraph{CNOT ordering and diagonal tendency.}
To analyze the diagonal tendency, we examine how errors at different CNOT layers propagate to the data qubits.
Consider a bulk check with data neighbors $q_1,\ldots,q_4$ ordered by CNOT layer, as shown in Fig.~\ref{fig:surface_code_check_circuits}(a,c) and specified in Eqs.~\eqref{eq:surface_code_x_check_order} and \eqref{eq:surface_code_z_check_order}.
Write $P_j$ for Pauli $P$ on $q_j$.
A CNOT propagates an $X$ error from the control to the target ($X_c\mapsto X_cX_t$) and a $Z$ error from the target to the control ($Z_t\mapsto Z_cZ_t$).
Thus an ancilla $X$ error in an $X$ check, or an ancilla $Z$ error in a $Z$ check, spreads to the data qubits involved in the remaining CNOTs.
Immediately after the layer-$j$ CNOT, $P_aI_{q_j}$ represents a $P$ error on the ancilla alone, while $P_aP_j$ represents $P$ errors on both the ancilla and $q_j$.
Propagation through the remaining CNOTs of the check then produces the data errors $H_j$ and $P_jH_j$, respectively, where
\begin{equation}
    H_j=\prod_{k=j+1}^{4}P_k.
    \label{eq:hotspot_hook_suffix}
\end{equation}
Modulo the measured stabilizer $K_a=P_1P_2P_3P_4$, the two representatives $E$ and $EK_a$ have complementary data supports.
Table~\ref{tab:hotspot_hook_cosets} lists the resulting classes $[E]=\{E,EK_a\}$ and their minimum weights.

\begin{table}[b]
\caption{
Data errors produced by Pauli errors propagating through the CNOTs following layer $j$ of a bulk stabilizer measurement.
The columns $[H_j]$ and $[P_jH_j]$ correspond to the errors $P_aI_{q_j}$ and $P_aP_j$, respectively, after propagation through the remaining CNOTs.
Brackets group errors that differ by the measured stabilizer $K_a$.
The number in parentheses gives the smallest number of affected data qubits in each group.
}
\label{tab:hotspot_hook_cosets}
\begin{ruledtabular}
\begin{tabular}{c c c}
$j$ & $[H_j]$ & $[P_jH_j]$ \\
\hline
$1$ & $[P_1]\ (1)$       & $[I]\ (0)$ \\
$2$ & $[P_3P_4]\ (2)$    & $[P_1]\ (1)$ \\
$3$ & $[P_4]\ (1)$       & $[P_3P_4]\ (2)$ \\
$4$ & $[I]\ (0)$         & $[P_4]\ (1)$
\end{tabular}
\end{ruledtabular}
\end{table}

Among the errors $P_aI_{q_j}$ and $P_aP_j$ in layers $j=1,\ldots,4$, only the layer-$2$ error $P_aI_{q_2}$ and the layer-$3$ error $P_aP_3$ yield weight-two supports irreducible modulo $K_a$.
For both check types, the CNOTs in layers $2$ and $3$ connect the ancilla to the pair of data qubits at offsets $(-1,+1)$ and $(+1,-1)$, while those in layers $1$ and $4$ connect it to the pair at $(+1,+1)$ and $(-1,-1)$.
In the matrix-style coordinates of Fig.~\ref{fig:surface_code_check_circuits}, these coupling pairs lie on the displayed $/$ and $\backslash$ diagonals, respectively.
For the two Pauli error types considered here, errors after the $/$-oriented CNOTs can propagate into irreducible two-qubit data errors, whereas those after the $\backslash$-oriented CNOTs produce data errors of minimum weight zero or one.
Intuitively, CNOT propagation may make local errors on data qubits along the $/$ direction more likely to combine with other errors and cause logical failure, giving rise to the observed hotspots.

\section{Hotspot-guided targeted intervention simulations}
\label{app:hotspot_intervention}

\subsection{Intervention setup and sampling}

At $p=0.001$, we compare circuits without intervention with circuits subject to either hotspot-guided or random intervention.
For the interventions in Fig.~\ref{fig:hotspot_targeted_intervention}, we use the hotspot sets identified in Appendix~\ref{app:hotspot_aggregation}.
The local error rate of each selected physical component is reduced according to Eq.~\eqref{eq:targeted_intervention}; all other component error rates remain unchanged.

At code distance $d$, let $k_d$ denote the number of components in the hotspot set and $N_d$ the total number of physical components in the circuit.
For comparison, we generate 40 distinct random sets, each containing $k_d$ components chosen uniformly at random from all $N_d$ components.
For each random set, we halve the local error rate of every selected component ($\alpha=0.5$), as in the hotspot-guided intervention.

The decoder is constructed from the detector error model of the circuit being evaluated, so its edge weights reflect the noise probabilities after intervention.

For each code distance, we simulate the circuit before and after hotspot-guided intervention in both the $X$ and $Z$ memory bases.
At $d=3$, we use $6.48\times10^7$ shots for the $X$ memory and $7.02\times10^7$ for the $Z$ memory both before and after intervention, totaling $1.35\times10^8$ shots in each case.
At $d=5$, the corresponding numbers are $3.726\times10^8$ and $4.104\times10^8$, totaling $7.83\times10^8$ shots in each case.
At $d=7$, they are $2.6784\times10^9$ and $3.0618\times10^9$, totaling $5.7402\times10^9$ shots in each case.

For the random intervention, each of the 40 component sets uses $1.2\times10^6$ shots for the $X$ memory and $1.3\times10^6$ for the $Z$ memory at $d=3$.
The corresponding numbers per set are $6.9\times10^6$ and $7.6\times10^6$ at $d=5$, and $4.96\times10^7$ and $5.67\times10^7$ at $d=7$.
The random intervention therefore uses $1.0\times10^8$, $5.8\times10^8$, and $4.252\times10^9$ shots in total at $d=3,5,7$, respectively.

\subsection{Statistical analysis}
\label{app:unified_intervention_statistics}

We use the same statistical analysis for the surface code and mitten code intervention simulations.
We use $s=0$, $s=\mathrm{target}$, and $s=\mathrm{random}$ to denote no intervention, hotspot-guided intervention, and intervention on a fixed random component set, respectively.
For memory basis $b\in\{X,Z\}$, let $F_b^{(s)}$ and $N_b^{(s)}$ denote the total numbers of logical failures and shots for circuit $s$, respectively.
We estimate the logical error rate in each basis as $F_b^{(s)}/N_b^{(s)}$ and average the two estimates:
\begin{equation}
    P_{\mathrm{L}}^{(s)}
    =
    \frac{1}{2}\left(
      \frac{F_X^{(s)}}{N_X^{(s)}}+
      \frac{F_Z^{(s)}}{N_Z^{(s)}}
    \right).
    \label{eq:intervention_basis_average_ler}
\end{equation}

For random component set $j=1,\ldots,40$, we write its logical error rate averaged over the two memory bases as $P_{\mathrm{L},j}^{(\mathrm{random})}$.
The reported random-intervention rate is the average over the 40 sets:
\begin{equation}
    \overline{P}_{\mathrm{L}}^{(\mathrm{random})}
    =\frac{1}{40}\sum_{j=1}^{40}
      P_{\mathrm{L},j}^{(\mathrm{random})}.
\end{equation}
This estimates the mean performance of uniform random component selection.
Relative to the circuit without intervention, the reduction in logical error rate from hotspot-guided intervention and the mean reduction from random intervention are given by
\begin{equation}
    \begin{aligned}
        \Delta_{\mathrm{t}}
        &=1-\frac{P_{\mathrm{L}}^{(\mathrm{target})}}{P_{\mathrm{L}}^{(0)}},\\
        \Delta_{\mathrm{r}}
        &=1-\frac{\overline{P}_{\mathrm{L}}^{(\mathrm{random})}}{P_{\mathrm{L}}^{(0)}}.
    \end{aligned}
\end{equation}

We use bootstrap resampling to quantify the statistical uncertainty in the logical error rates and their reductions.
In each bootstrap replicate, we independently generate a failure count $F_b^{(s)*}$ for each memory basis in the simulations before and after hotspot-guided intervention.
We treat the $N_b^{(s)}$ shots as independent samples, each with failure probability $F_b^{(s)}/N_b^{(s)}$:
\begin{equation}
    F_b^{(s)*}\sim\operatorname{Binomial}\left(
        N_b^{(s)},\frac{F_b^{(s)}}{N_b^{(s)}}
    \right),
    \quad s\in\{0,\mathrm{target}\}.
    \label{eq:intervention_bootstrap_counts}
\end{equation}
We then average the two rates $F_X^{(s)*}/N_X^{(s)}$ and $F_Z^{(s)*}/N_Z^{(s)}$ for each case.

For random intervention, the 40 simulated component sets give 40 logical error rate estimates $P_{\mathrm{L},j}^{(\mathrm{random})}$, each averaged over the $X$ and $Z$ memories.
In each bootstrap replicate, we sample 40 of these estimates independently with replacement and average them to obtain the random-intervention logical error rate.

In each bootstrap replicate, we use the same resampled rate for the circuit without intervention to calculate the reductions from hotspot-guided and random intervention.
We obtain differences in logical error rate by subtracting the corresponding logical error rates from that replicate.
The 95\% confidence intervals for the logical error rates, reductions, and differences in logical error rate are given by the 2.5th and 97.5th percentiles of the corresponding estimates over 50,000 bootstrap replicates.

The improvement ratio reported in Sec.~\ref{sec:hotspot_intervention} is
\begin{equation}
    \mathcal I
    =\frac{\Delta_{\mathrm{t}}}{\Delta_{\mathrm{r}}}
    =\frac{1-P_{\mathrm{L}}^{(\mathrm{target})}/P_{\mathrm{L}}^{(0)}}
           {1-\overline{P}_{\mathrm{L}}^{(\mathrm{random})}/P_{\mathrm{L}}^{(0)}}.
    \label{eq:targeted_vs_random_improvement_ratio}
\end{equation}
We use the same Fieller method~\cite{fieller1954some} to calculate the confidence intervals for the improvement ratios in the surface code and mitten code simulations.
Fieller intervals are used because the denominator is the reduction under random intervention, which may not be distinguishable from zero.
Let $A=P_{\mathrm L}^{(0)}-P_{\mathrm L}^{(\mathrm{target})}$ and $B=P_{\mathrm L}^{(0)}-\overline P_{\mathrm L}^{(\mathrm{random})}$, so that $\mathcal I=A/B$.
We denote the estimates from the simulation data by $\widehat A$ and $\widehat B$, and their values in each bootstrap replicate by $A^*$ and $B^*$.
In each replicate, we use the same bootstrap estimate of the baseline rate, $P_{\mathrm L}^{(0)*}$, to calculate both $A^*$ and $B^*$.

We approximate the joint sampling distribution of $\widehat A$ and $\widehat B$ by a normal distribution.
For a candidate improvement ratio $r$, we check whether $A-rB$ is consistent with zero.
The $95\%$ confidence set contains all values $r$ satisfying
\begin{equation}
    (\widehat A-r\widehat B)^2
    \le z_{0.975}^{\,2}\,
    \widehat{\mathrm{Var}}_{*}(A^*-rB^*).
    \label{eq:intervention_fieller}
\end{equation}
Here $z_{0.975}\approx1.96$ is the 97.5th percentile of the standard normal distribution.
The variance on the right-hand side is the sample variance of $A^*-rB^*$ across the 50,000 bootstrap replicates.

Table~\ref{tab:hotspot_intervention_results} summarizes the surface code intervention results.

\begin{table*}[t]
\caption{
Logical error rates and relative reductions for rotated surface code interventions at $p=0.001$ and $\alpha=0.5$, corresponding to Fig.~\ref{fig:hotspot_targeted_intervention}.
$k_d$ and $N_d$ denote the numbers of selected and total physical components.
All rates average the $X$- and $Z$-memory results.
$P_{\mathrm{L}}^{(0)}$ and $P_{\mathrm{L}}^{(\mathrm{t})}$ are the rates without intervention and after hotspot-guided intervention.
$\overline{P}_{\mathrm{L}}^{(\mathrm{r})}$ is the mean rate over 40 sets of $k_d$ components chosen uniformly at random.
$\Delta_{\mathrm{t}}$ and $\Delta_{\mathrm{r}}$ give the corresponding relative reductions to $P_{\mathrm{L}}^{(0)}$, and $\mathcal I=\Delta_{\mathrm{t}}/\Delta_{\mathrm{r}}$ is their ratio.
Brackets give $95\%$ confidence intervals: bootstrap percentile intervals for the reductions and Fieller intervals for $\mathcal I$.
}
\label{tab:hotspot_intervention_results}
\begin{ruledtabular}
\setlength{\tabcolsep}{2.8pt}
\begin{tabular}{c c c c c c c c}
$d$ & $k_d/N_d$ & $P_{\mathrm{L}}^{(0)}$ &
$\overline{P}_{\mathrm{L}}^{(\mathrm{r})}$ &
$P_{\mathrm{L}}^{(\mathrm{t})}$ &
$\Delta_{\mathrm{r}}$ & $\Delta_{\mathrm{t}}$ &
$\mathcal I=\Delta_{\mathrm{t}}/\Delta_{\mathrm{r}}$ \\
\hline
3 & $3/41$ & $8.20881\times10^{-4}$ & $7.60632\times10^{-4}$ &
$7.03120\times10^{-4}$ & $7.34\ [6.26,8.38]\%$ &
$14.35\ [13.60,15.08]\%$ & $1.95\ [1.72,2.26]$ \\
5 & $7/129$ & $1.39376\times10^{-4}$ & $1.28452\times10^{-4}$ &
$1.18580\times10^{-4}$ & $7.84\ [6.89,8.79]\%$ &
$14.92\ [14.17,15.66]\%$ & $1.90\ [1.71,2.14]$ \\
7 & $13/265$ & $1.89902\times10^{-5}$ & $1.69319\times10^{-5}$ &
$1.56532\times10^{-5}$ & $10.84\ [9.83,11.83]\%$ &
$17.57\ [16.84,18.30]\%$ & $1.62\ [1.49,1.77]$ \\
\end{tabular}
\end{ruledtabular}
\end{table*}

\subsection{Intervention performance across budgets}
\label{app:surface_code_budget_sweep}

In Fig.~\ref{fig:surface_code_budget_sweep}, we compare intervention performance across budgets for the $d=7$ surface code, using seven syndrome rounds, $p=0.001$, and $\alpha=0.5$.
For each $k\in\{1,5,9,13,17,21,25\}$, we select the top $k$ components from the same sensitivity ranking obtained during hotspot identification.
For comparison, we generate 40 distinct random sets for each $k$, each containing $k$ different components chosen uniformly at random from the 265 physical components.
Each circuit is decoded by its own reoptimized PyMatching decoder.

For the circuit without intervention, we use $2.6784\times10^9$ shots for the $X$ memory and $3.0618\times10^9$ for the $Z$ memory, totaling $5.7402\times10^9$ shots.
For hotspot-guided intervention, we use the same numbers of shots for each value of $k$.
The baseline data are shared across all values of $k$.

For random intervention, each of the 40 component sets uses $4.96\times10^7$ shots for the $X$ memory and $5.67\times10^7$ for the $Z$ memory.
The random intervention therefore uses $4.252\times10^9$ shots in total for each value of $k$.
Including the shared baseline once, the simulations across all seven values of $k$ use $7.56856\times10^{10}$ shots in total.

The $k=13$ comparison uses the same simulation data as the $d=7$ comparison in Fig.~\ref{fig:hotspot_targeted_intervention}.

Logical error rates and relative reductions are calculated using the statistical procedure in Appendix~\ref{app:unified_intervention_statistics}.
Table~\ref{tab:surface_code_budget_sweep} reports the relative reductions and their pointwise $95\%$ confidence intervals.

\begin{table*}[t]
\caption{
Relative logical error rate reduction versus the number $k$ of selected components for the $d=7$ rotated surface code at $p=0.001$ and $\alpha=0.5$, corresponding to Fig.~\ref{fig:surface_code_budget_sweep}.
Hotspot-guided intervention selects the $k$ components with the highest sensitivities; random intervention averages the results over 40 sets of $k$ components chosen uniformly at random.
Brackets give $95\%$ bootstrap percentile confidence intervals for individual comparisons.}
\label{tab:surface_code_budget_sweep}
\begin{ruledtabular}
\begin{tabular}{c c c c}
$k$ & Selected components (\%) & Targeted reduction (\%) & Mean random reduction (\%) \\
\hline
1  & $0.38$ & $2.55\ [1.73,3.37]$   & $1.15\ [0.23,2.04]$ \\
5  & $1.89$ & $7.85\ [7.05,8.64]$   & $4.16\ [3.25,5.08]$ \\
9  & $3.40$ & $13.12\ [12.36,13.87]$ & $7.52\ [6.66,8.36]$ \\
13 & $4.91$ & $17.57\ [16.84,18.30]$ & $10.84\ [9.83,11.83]$ \\
17 & $6.42$ & $21.88\ [21.17,22.58]$ & $13.35\ [12.47,14.23]$ \\
21 & $7.92$ & $25.81\ [25.12,26.48]$ & $16.12\ [15.25,16.98]$ \\
25 & $9.43$ & $29.34\ [28.68,29.99]$ & $18.76\ [17.88,19.63]$ \\
\end{tabular}
\end{ruledtabular}
\end{table*}

\section{High-rate lifted product intervention simulations}
\label{app:high_rate_intervention}

\subsection{Code construction and syndrome extraction}

The $[\![150,30,10]\!]$, $[\![200,40,12]\!]$, and $[\![300,60,14]\!]$ codes are mitten codes of Ref.~\cite{mitten2026processors}.  
They are constructed using the lifted product $\mathrm{LP}(A,B)$ of the base matrices $A=\begin{pmatrix}a_0 & a_1\end{pmatrix}$ and $B=\begin{pmatrix}b_0 & b_1\end{pmatrix}$ over $\mathbb{F}_2[G]$, with $G=C_5\times S_3$, $|G|=30$; $G=C_4\times D_{10}$, $|G|=40$; and $G=C_{10}\times S_3$, $|G|=60$, respectively.
Here $C_n$ denotes the cyclic group of order $n$, $S_3$ the symmetric group on three elements (of order $6$), and $D_{10}$ the dihedral group of order $10$.
For $a=\sum_{g\in G}\alpha_g g\in\mathbb{F}_2[G]$, we define the involution $a^*=\sum_{g\in G}\alpha_g g^{-1}$.
Let $\mathbf{e}_h$ denote the standard basis vector indexed by $h\in G$.
The left and right regular representations are defined by $L(g)\mathbf{e}_h=\mathbf{e}_{gh}$ and $R(g)\mathbf{e}_h=\mathbf{e}_{hg^{-1}}$ for $g,h\in G$, and extended linearly to $\mathbb{F}_2[G]$.
With these conventions, $L(a^*)=L(a)^\top$ and $R(a^*)=R(a)^\top$.
The $X$- and $Z$-check matrices of all three codes are
\begin{equation}
\begin{aligned}
    H_X&=\begin{pmatrix}
        L(a_0) & 0 & L(a_1) & 0 & R(b_0^{*})\\
        0 & L(a_0) & 0 & L(a_1) & R(b_1^{*})
    \end{pmatrix},\\[2pt]
    H_Z&=\begin{pmatrix}
        R(b_0) & R(b_1) & 0 & 0 & L(a_0^{*})\\
        0 & 0 & R(b_0) & R(b_1) & L(a_1^{*})
    \end{pmatrix}.
\end{aligned}
    \label{eq:mitten_blocks}
\end{equation}
Since the left and right regular representations commute, the check matrices satisfy $H_XH_Z^{\!\top}=0$.
Each code has $5|G|$ data qubits, $2|G|$ independent $X$ checks, and $2|G|$ independent $Z$ checks.
Every check acts on $9$ data qubits.
Each code therefore encodes $k=5|G|-4|G|=|G|$ logical qubits, giving $k=30,40,60$, respectively, and an encoding rate of $1/5$.

The syndrome extraction circuit is organized according to the block structure of $H_X$ and $H_Z$.
The five block columns in Eq.~\eqref{eq:mitten_blocks} correspond to data qubit blocks $D_1,\ldots,D_5$.
The two block rows of $H_X$ and $H_Z$ correspond to ancilla blocks $X_0,X_1$ and $Z_0,Z_1$, respectively.
Each block contains $|G|$ qubits, labelled by $h\in G$.
We write $D_i(h)$, $X_i(h)$, and $Z_i(h)$ for the qubits labelled $h$ in their respective blocks.
Each of $a_0,a_1,b_0,b_1$ is a sum of three group elements.
For each check type, the qubit pairs associated with the same group-element term of a base entry are coupled in one CNOT layer.
For example, a group element $g$ in $a_0$ gives an $X$-check layer containing the CNOT gates
\begin{equation}
    X_0(h)\to D_1(g^{-1}h),\qquad
    X_1(h)\to D_2(g^{-1}h),
\end{equation}
for all $h\in G$, with all gates applied in parallel.
The three group elements in $a_0$ give three such layers.
The other layers follow in the same way from the blocks of $H_X$ and $H_Z$ in Eq.~\eqref{eq:mitten_blocks}.
The ancillas are the CNOT controls for $X$ checks and the targets for $Z$ checks.
Each check type uses $12$ CNOT layers, three for each of the four base entries, giving $24$ layers per syndrome round.

We reconstruct the syndrome extraction schedules within the family described in Ref.~\cite{mitten2026processors}.
Each schedule consists of four sections, $XL$, $XR$, $ZL$, and $ZR$, where the first letter specifies the check type and the second the left- or right-regular action.
The scheduling choices are the order of these sections and the order of the six CNOT layers within each section.
The schedules used in our simulations may differ from those used in the original study.

For the $[\![150,30,10]\!]$, $[\![200,40,12]\!]$, and $[\![300,60,14]\!]$ codes, the circuits contain $150$, $200$, and $300$ data qubits and $120$, $160$, and $240$ check ancillas, respectively.
The check ancillas are divided equally between $X$ and $Z$ checks.
The corresponding numbers of directed CNOT couplings are $1080$, $1440$, and $2160$.
Counting each qubit and each directed CNOT coupling once across all syndrome rounds gives $1350$, $1800$, and $2700$ physical components, respectively.

\subsection{Noise model and decoding}

For the Mitten-code experiment, we use the same circuit-level depolarizing model as Ref.~\cite{mitten2026processors}.
All noisy operations share a single error probability $p$.
Each qubit initialization is followed by a one-qubit depolarizing channel, and the same channel is applied immediately before each qubit measurement.
Each CNOT is followed by a two-qubit depolarizing channel acting jointly on its control and target qubits.
Both channels are defined by Eq.~\eqref{eq:depolarizing_channel}, with $k=1$ and $k=2$, respectively.
For data qubits, the one-qubit channel is applied after initial preparation and before final measurement.
For ancillas, it is applied after preparation and before measurement in every syndrome round.
No noise is applied during idle periods.

We simulate the $X$- and $Z$-memory experiments at $p=0.002$.
The circuits for the $[\![150,30,10]\!]$, $[\![200,40,12]\!]$, and $[\![300,60,14]\!]$ codes use $10$, $12$, and $14$ rounds of syndrome extraction, respectively.

We use the published telescoping-decoder implementation of Ref.~\cite{mitten2026processors}, with its final integer-programming stage S4 disabled.
For each shot, we first attempt decoding with GPU layered belief propagation (S1).
If this stage does not return an accepted correction, the shot proceeds to GPU relay belief propagation (S2) and, if needed, the CPU serial belief-propagation stages (S3-A/B/C).
Decoding stops at the first stage that returns a correction satisfying its acceptance criteria.
We use the acceptance criteria of Ref.~\cite{mitten2026processors}.
For each code, we keep the decoder parameters fixed across the baseline, hotspot-guided intervention, and random intervention experiments.
We construct the decoder from the detector error model of the circuit being simulated.
For the hotspot-guided and random interventions, this model includes the reduced error probabilities of the selected components.

If none of stages S1--S3 returns an accepted correction, the shot is recorded as a decoder failure.
For shots with an accepted correction, we compare the predicted outcome for each of the $k$ logical qubits with its corresponding simulator outcome in the chosen memory basis.
If any outcome disagrees, the shot is counted as one logical failure.
The reported logical error rate is the fraction of shots with either type of failure.

\subsection{Hotspot identification}

To identify hotspots, we estimate component sensitivities from five independent runs with different random seeds for each code and memory basis.
Summed over all runs in both the $X$ and $Z$ memory bases, the $[\![150,30,10]\!]$ experiment uses $8.15\times10^{7}$ shots and records $4{,}434$ logical failures.
The corresponding totals are $4.00\times10^{8}$ shots and $1{,}884$ logical failures for $[\![200,40,12]\!]$, and $1.775\times10^{9}$ shots and $4{,}038$ logical failures for $[\![300,60,14]\!]$.

Fig.~\ref{fig:mitten_sensitivity_maps} shows the sensitivity of each directed CNOT coupling.
The couplings are arranged by the data qubit blocks, ancilla blocks, and group-element labels introduced above.
For each component, we average the signed sensitivities from the $X$- and $Z$-memory runs with the same random seed and take the absolute value.
We then average these values over the five seeds:
\begin{equation}
    \overline A_c=\frac{1}{5}\sum_s
    \left|\frac{S^X_{c,s}+S^Z_{c,s}}{2}\right|,
    \qquad
    \widetilde A_c=\frac{\overline A_c}{\max_{c'}\overline A_{c'}}.
    \label{eq:mitten_map_score}
\end{equation}
For each code, we normalize the scores by the maximum over all physical components and display them on a $0$--$1$ color scale.

\begin{figure}[!tp]
    \centering
    \includegraphics[width=\columnwidth]{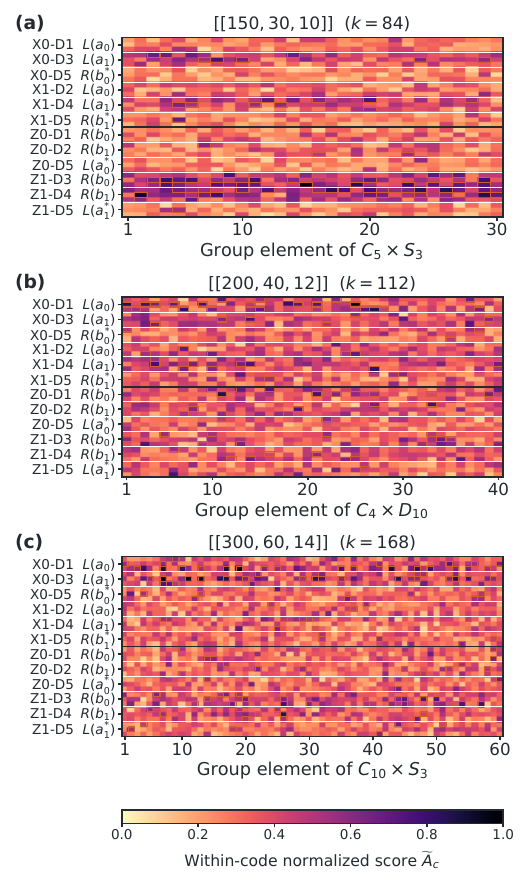}
    \caption{
    Sensitivity maps of directed CNOT couplings for the (a) $[\![150,30,10]\!]$, (b) $[\![200,40,12]\!]$, and (c) $[\![300,60,14]\!]$ mitten codes at $p=0.002$.
    Colors show the relative sensitivities $\widetilde A_c$ defined in Eq.~\eqref{eq:mitten_map_score}, normalized separately for each code.
    Each cell represents one CNOT coupling.
    Rows follow the blocks of $H_X$ and $H_Z$ in Eq.~\eqref{eq:mitten_blocks}, with three rows for the three group-element terms in each nonzero block.
    Columns index the group elements labelling the check ancillas in $C_5\times S_3$, $C_4\times D_{10}$, and $C_{10}\times S_3$, respectively.
    The panels contain all $1080$, $1440$, and $2160$ couplings.
    Orange outlines mark the $77$, $85$, and $132$ couplings selected for intervention.
    The selected sets also contain $7$, $27$, and $36$ check ancillas, which are not shown, giving $k=84$, $112$, and $168$ selected components in total.
    }
    \label{fig:mitten_sensitivity_maps}
\end{figure}

\subsection{Intervention setup and sampling}

The hotspot-guided interventions target $84$, $112$, and $168$ physical components in the $[\![150,30,10]\!]$, $[\![200,40,12]\!]$, and $[\![300,60,14]\!]$ codes, respectively, corresponding to $6.22\%$ of the components in each circuit.
These sets contain no data qubits.
For $[\![150,30,10]\!]$, the selected components comprise $7$ check ancillas and $77$ directed CNOT couplings, accounting for $8.33\%$ and $91.67\%$ of the hotspot set, respectively.
The corresponding counts are $27$ ancillas ($24.11\%$) and $85$ couplings ($75.89\%$) for $[\![200,40,12]\!]$, and $36$ ancillas ($21.43\%$) and $132$ couplings ($78.57\%$) for $[\![300,60,14]\!]$.

The selected CNOT couplings are concentrated in the $L(a_0)$ and $L(a_1)$ blocks of $H_X$ and the $R(b_0)$ and $R(b_1)$ blocks of $H_Z$.
These blocks comprise two thirds of all CNOT couplings but contain $75$ of $77$, $77$ of $85$, and $107$ of $132$ selected couplings in the three codes, respectively.
Specifically, for $[\![150,30,10]\!]$, the two blocks connecting $Z_1$ to $D_3$ and $D_4$ contain $55$ of the $77$ selected couplings ($71.4\%$), while comprising only $16.7\%$ of all CNOT couplings.
For $[\![200,40,12]\!]$, $33$ of the $85$ selected couplings ($38.8\%$) connect to $D_1$: $22$ couple it to $X_0$ and $11$ to $Z_0$.
These two blocks comprise $16.7\%$ of all CNOT couplings.
For $[\![300,60,14]\!]$, $50$ of the $132$ selected couplings ($37.9\%$) connect $X_0$ to $D_1$ or $D_3$, in the $L(a_0)$ and $L(a_1)$ blocks of $H_X$.
These two blocks comprise $16.7\%$ of all CNOT couplings.
Within them, the two group-element rows with the most hotspots, one in each block, contain $33$ selected couplings ($25.0\%$), while comprising only $5.56\%$ of all CNOT couplings.

For code $i$, we denote the selected hotspot set by $H_i$ and fix it before intervention sampling.
The local error rate of each physical component $c\in H_i$ is reduced according to
Eq.~\eqref{eq:targeted_intervention} with $\alpha=0.5$.
For comparison, we simulate random interventions on $40$ component sets for each code.
Each set contains the same number of physical components as $H_i$: $84$, $112$, and $168$ for the three codes, respectively.
As in the hotspot-guided intervention, the local error rate of each selected component is halved.

All shot counts are totals over the $X$ and $Z$ memory bases, with half of the shots in each basis.
For $[\![150,30,10]\!]$, the simulations without intervention and after hotspot-guided intervention each use $5.8\times10^{7}$ shots.
For $[\![200,40,12]\!]$, we use $3.0\times10^{8}$ shots in each case.
For $[\![300,60,14]\!]$, the hotspot-guided intervention uses $1.0556\times10^{9}$ shots.
For its baseline, we reuse the $1.775\times10^{9}$ samples used to estimate component sensitivities.

For random intervention, each of the $40$ component sets uses $1.45\times10^{6}$, $7.5\times10^{6}$, and $2.639\times10^{7}$ shots for the three codes, respectively.
Summed over the $40$ sets, the corresponding totals are $5.8\times10^{7}$, $3.0\times10^{8}$, and $1.0556\times10^{9}$ shots, matching the hotspot-guided intervention for each code.

\subsection{Statistical analysis}

We calculate the logical error rates and confidence intervals using the procedure in Appendix~\ref{app:unified_intervention_statistics}.
Within each memory basis, the logical error rate is the total number of logical failures divided by the total number of shots.
We report the average of the $X$- and $Z$-memory estimates.
For random intervention, we then average these results over the $40$ component sets.

We use $50{,}000$ bootstrap replicates to quantify the statistical uncertainty.
For the simulations without intervention and after hotspot-guided intervention, each replicate independently generates failure counts for each memory basis according to Eq.~\eqref{eq:intervention_bootstrap_counts}.
For random intervention, we draw $40$ values at random from the logical error rate estimates of the $40$ component sets.
We average these $40$ sampled estimates to obtain the logical error rate under random intervention for that bootstrap replicate.
In each replicate, we use the same resampled baseline rate to calculate the reductions from hotspot-guided and random intervention.
The $95\%$ confidence intervals for the logical error rates and their reductions are given by the 2.5th and 97.5th percentiles of the bootstrap estimates.
The confidence intervals for the improvement ratios are calculated using the Fieller method in Eq.~\eqref{eq:intervention_fieller}.

Table~\ref{tab:high_rate_intervention_results} summarizes the mitten code intervention results.

\begin{table*}[t]
\caption{
Logical error rates and relative reductions for the mitten codes at $p=0.002$ and $\alpha=0.5$, corresponding to Fig.~\ref{fig:high_rate_intervention}.
$k$ is the number of selected components, and $F$ and $N$ are the total numbers of logical failures and shots over both memory bases.
For random intervention, these counts also sum over all 40 component sets.
$P_{\mathrm{L}}$ averages the $X$- and $Z$-memory rates and, for random intervention, also averages over the 40 sets.
$\Delta_{\mathrm{t}}$ and $\Delta_{\mathrm{r}}$ are relative reductions to the circuit without intervention; $\mathcal I$ is their ratio.
Brackets give $95\%$ confidence intervals from the $50{,}000$-replicate procedure in Appendix~\ref{app:unified_intervention_statistics}: bootstrap percentile intervals for reductions and Fieller intervals for $\mathcal I$.
}
\label{tab:high_rate_intervention_results}
\begin{ruledtabular}
\setlength{\tabcolsep}{3.1pt}
\begin{tabular}{l c c c c}
Strategy & $F$ & $N$ & $P_{\mathrm{L}}$ &
Relative reduction $[95\%\ \mathrm{CI}]$ \\
\hline
\multicolumn{5}{l}{$[\![150,30,10]\!]$: $k=84$} \\
No intervention & $3{,}184$ & $5.8\times10^{7}$ & $5.48966\times10^{-5}$ &
--- \\
Uniform ($40$ sets) & $2{,}748$ & $5.8\times10^{7}$ &
$4.73793\times10^{-5}$ &
$\Delta_{\mathrm{r}}=13.69\ [9.03,18.17]\%$ \\
Targeted ($k=84$) & $2{,}383$ & $5.8\times10^{7}$ & $4.10862\times10^{-5}$ &
$\Delta_{\mathrm{t}}=25.16\ [21.12,29.03]\%$ \\
\multicolumn{5}{l}{Improvement ratio $\mathcal I=\Delta_{\mathrm{t}}/\Delta_{\mathrm{r}}=1.84\ [1.41,2.65]$} \\
\hline
\multicolumn{5}{l}{$[\![200,40,12]\!]$: $k=112$} \\
No intervention & $1{,}356$ & $3.0\times10^{8}$ & $4.52000\times10^{-6}$ &
--- \\
Uniform ($40$ sets) & $1{,}174$ & $3.0\times10^{8}$ &
$3.91333\times10^{-6}$ &
$\Delta_{\mathrm{r}}=13.42\ [7.02,19.33]\%$ \\
Targeted ($k=112$) & $1{,}044$ & $3.0\times10^{8}$ &
$3.48000\times10^{-6}$ &
$\Delta_{\mathrm{t}}=23.01\ [16.53,29.00]\%$ \\
\multicolumn{5}{l}{Improvement ratio $\mathcal I=\Delta_{\mathrm{t}}/\Delta_{\mathrm{r}}=1.71\ [1.20,2.95]$} \\
\hline
\multicolumn{5}{l}{$[\![300,60,14]\!]$: $k=168$} \\
No intervention & $4{,}038$ & $1.775\times10^{9}$ & $2.27493\times10^{-6}$ &
--- \\
Uniform ($40$ sets) & $2{,}101$ & $1.0556\times10^{9}$ &
$1.99034\times10^{-6}$ &
$\Delta_{\mathrm{r}}=12.51\ [7.39,17.44]\%$ \\
Targeted ($k=168$) & $1{,}797$ & $1.0556\times10^{9}$ & $1.70235\times10^{-6}$ &
$\Delta_{\mathrm{t}}=25.17\ [20.92,29.25]\%$ \\
\multicolumn{5}{l}{Improvement ratio $\mathcal I=\Delta_{\mathrm{t}}/\Delta_{\mathrm{r}}=2.01\ [1.44,3.29]$} \\
\end{tabular}
\end{ruledtabular}
\end{table*}

\clearpage
\bibliography{main}

\end{document}